\newcommand{\bs}[1]{\boldsymbol{#1}}
\newcommand{\cl}[1]{\mathcal{#1}}
\newcommand{\al}{\alpha}
\newcommand{\eps}{\epsilon}
\newcommand{\la}{\lambda}
\newcommand{\p}{\partial}
\numberwithin{equation}{section}
\newtheorem{thm}{Theorem}[section]
\newtheorem{prop}[thm]{Proposition}
\theoremstyle{remark}
\newcommand{\tr}{\mathrm{tr}\,}
\renewcommand{\div}{\mathrm{div}}
\newcommand{\rar}{\rightarrow}
\newcommand{\bbC}{\mathbb C}
\newcommand{\bbG}{\mathbb G}
\newcommand{\bbK}{\mathbb K}
\newcommand{\bbR}{\mathbb R}
\newcommand{\tens}{\otimes}
\newcommand{\ReG}{\mathrm{Re}\,G}
\newcommand{\ImG}{\mathrm{Im}\,G}
\newcommand{\ReQ}{\mathrm{Re}\,Q}
\newcommand{\ImQ}{\mathrm{Im}\,Q}
\begin{document}

\title[Surface strain-gradient elastic crack faces]{Mixed-mode loading of a straight crack with surface strain-gradient elasticity}
\dedicatory{Dedicated to D. J. Steigmann, in recognition of his profound influence on the \\ interplay between Geometry and Mechanics.}
\author{C. Rodriguez, A. Zemlyanova}

\begin{abstract}
This work models brittle fracture using a linearized surface-substrate theory in which the crack faces possess surface stresses derived from a surface strain-gradient elastic energy. The model incorporates surface stretching, curvature, and surface gradients of stretching into the surface energy, thereby capturing small-length-scale effects absent from earlier surface elasticity formulations. The theory, supplemented with physically motivated tip conditions, is applied to the mixed mode-I/mode-II loading of a finite straight crack in an infinite isotropic plate. Using complex-analytic techniques, it is shown that the resulting stress and strain fields remain bounded up to the crack tips for nearly all admissible parameter values. Combined with previous results for mode-III loading, the analysis demonstrates that linearized surface-substrate models incorporating surface strain-gradient elasticity eliminate crack-tip singularities across all principal modes of far-field loading. 
\end{abstract}

\maketitle

\section{Introduction}

\subsection{Surface-substrate interactions and fracture modeling}

It is now well established that the interactions between a solid and its surroundings through surface forces such as surface tension play a decisive role in determining mechanical behavior at small length scales, particularly near boundaries. This recognition, already implicit in the classical studies of fracture by Griffith, Westergaard, and Irwin \cite{Griffith1921, Westergaard1939, Irwin1956, Irwin1957, Anderson2005}, motivated efforts to represent surface effects within continuum theory. A natural mathematical framework for doing so is to endow portions of a body’s boundary with their own thermodynamic structure, characterized by an energy density distinct from that of the bulk. This idea forms the basis of modern theories of material surfaces, which extend classical continuum mechanics by assigning energetic and constitutive properties to both boundary and interior (interfacial) surfaces.

The first comprehensive formulation of such a theory was given by Gurtin and Murdoch \cite{GurtinMurd75, GurtinMurd78}, who sought to model compressive surface stresses observed in cleaved crystals. Their framework treats the surface as a membrane that resists stretching but not bending and has become a cornerstone in the study of small-scale phenomena confined to thin boundary or interfacial regions. Although the Gurtin-Murdoch theory has been applied successfully to a wide range of mechanical problems, Steigmann and Ogden \cite{SteigOgden97a, SteigOgden99} later showed that it cannot support compressive energy-minimizing equilibrium states.

This limitation was overcome by Steigmann and Ogden \cite{SteigOgden97a, SteigOgden99}, who extended the surface energy to include curvature dependence, thereby endowing the material surface with flexural stiffness. The resulting Steigmann-Ogden model has since been employed in the analysis of contact \cite{Zem18, MI2018, Zem19, LiMi19a, LiMi19b, LiMi21, ZemWhite22}, inclusion \cite{HAN2018, ZEMLYANOVA2018, ZEMMOG18B, DAI2018, MogKZem19, WANG2020311}, and brittle fracture problems \cite{Zem17StraightCrack, Zem21Penny, Zem20Multiple}, thereby establishing a unified theoretical framework for surfaces endowed with both in-plane and flexural stiffness. The success of these fracture-related applications provides the principal motivation for the present work.

Classical linear elastic fracture mechanics (LEFM) remains the most prominent framework for brittle fracture modeling, even though \textit{it predicts unbounded stresses and strains at crack tips}. Derived under the assumption of infinitesimal strains, the governing equations of LEFM produce singular solutions that are physically inconsistent. Numerous extensions of the theory have been proposed to regularize these singularities (see, e.g., \cite{Broberg}). Theories incorporating higher gradients of the displacement field in the bulk's stored energy beginning with the couple-stress formulations of Muki and Sternberg \cite{SternbergMuki65, SternbergMuki67} and Bogy and Sternberg \cite{BogySternberg67, BogySternberg68} eliminated singularities in the rotation gradient but left stress singularities intact. Subsequent strain-gradient theories \cite{AltanAif1992, Askes2011} succeeded in regularizing strains but not stresses. Collectively, these developments revealed that including higher-order gradients in the bulk's stored energy can moderate, but not fully remove, the singularities inherent in classical LEFM.

A different strategy for resolving crack-tip singularities emerged through the modification of boundary conditions rather than bulk constitutive laws. In this approach, the crack faces are modeled as material surfaces endowed with their own surface stresses, introducing higher surface gradients without altering the bulk's standard isotropic stored energy. Beginning with \cite{OhWaltonSlatt} and refined by Sendova and Walton \cite{SendovaWalton}, curvature-dependent surface tensions were postulated that could eliminate singularities for a number of problems \cite{ZemWalton12, Zem12Curvature, Zem14, Walton2014plane, Ferguson2015numerical}. However, these formulations lack a variational basis: the resulting surface stresses cannot, in general, be derived from an energy density. When surface stresses are instead derived from a Gurtin-Murdoch energy, singularities persist under far-field loading \cite{WaltonNote12, IMRUSch13}. More refined analyses \cite{Gorbushinetal20} indicate that inhomogeneous surface properties may alter the intensity of the singularities, but they do not remove them. By contrast, Steigmann-Ogden surface elasticity eliminates singularities for certain plane strain and axisymmetric cracks \cite{Zem17StraightCrack, Zem20Multiple, Zem21Penny}, though for anti-plane shear (mode-III loading), the curvature dependence vanishes and the crack-tip fields remain unbounded. The persistent failure of existing surface elasticity models under mode-III loading motivated the development of a theory that predicts bounded stresses and strains under all primary loading modes.

An extension in this direction was recently proposed in \cite{rodriguez2024elastic}, where the Steigmann-Ogden surface energy was augmented to include the surface gradient of stretching, producing an energy of surface strain-gradient type:
\begin{align}
	E = \int_{\mathcal B} W(\boldsymbol E), dA + \int_{\mathcal S} U(\boldsymbol E_s, \boldsymbol\rho, \nabla_s \boldsymbol E_s), dA. \label{eq:introenergy}
\end{align}
where $\cl B$ is the body, $\mathcal S \subseteq \p \cl B$ is the material surface with surface stresses, $\bs E$ is the Lagrange strain, $\bs E_s$ is the surface strain tensor, $\bs \rho$ is the relative normal curvature tensor, and $\nabla_s$ is the surface gradient.\footnote{See Section 2 for the precise definitions of these kinematic descriptors.} For planar surfaces $\mathcal S$, the functional form of $U$ coincides with that of Hilgers and Pipkin’s strain-gradient plate theory \cite{HilgPip92a}, while for curved surfaces it agrees with Steigmann’s theory of stand-alone strain-gradient elastic shells \cite{Steigmann18Lattice}. Physically, the inclusion of $\nabla_s \boldsymbol E_s$ endows the surface with resistance to geodesic distortion, namely, the tendency of convected geodesics to depart from geodesic paths on the deformed surface \cite{rodriguez2024elastic}.

In \cite{rodriguez2024elastic, rodriguez2024midsurface}, it was shown that the linearized surface-substrate model derived from \eqref{eq:introenergy} via the principle of virtual work eliminates crack-tip singularities for far-field mode-III loading in finite-length cracks, in contrast with the Gurtin-Murdoch and Steigmann-Ogden models. The work \cite{rodriguez2024elastic} employed an ad hoc quadratic surface energy $U$ suggested by Hilgers and Pipkin \cite{HilgPip92a}, and the subsequent work \cite{rodriguez2024midsurface} used the surface energy
\begin{gather}
	U = \frac{\la_s}{2} (\tr\boldsymbol E_s)^2 + \mu_s|\boldsymbol E_s|^2
	+ \ell_s^2\left(\frac{\la_s}{2} (\nabla_s \tr\boldsymbol E_s)^2 + \mu_s|\nabla_s \boldsymbol E_s|^2\right)
	+ \frac{\zeta}{2}(\tr\boldsymbol\rho)^2 + \eta|\boldsymbol\rho|^2, \label{eq:surfaceenergy}
\end{gather}
where $\ell_s$ a material length scale.
In contrast to the phenomenological energy used in \cite{rodriguez2024elastic}, the form \eqref{eq:surfaceenergy} can be derived by dimensional reduction from the three-dimensional stored energy
\begin{align}
	W_{sg} = \frac{\lambda}{2}(\tr\boldsymbol E)^2 + \mu|\boldsymbol E|^2 + \ell_s^2\left[\frac{\lambda}{2}(\nabla \tr\boldsymbol E)^2 + \mu|\nabla \boldsymbol E|^2\right], \label{eq:aifantis}
\end{align}
for a flat strain-gradient elastic layer of thickness $h$, resulting in the identifications
\begin{align}
	\la_s = \frac{2h\la \mu}{\la + 2 \mu}, \quad \mu_s = h \mu, \quad \zeta = \Bigl (\frac{h^3}{24} + h \ell_s^2 \Bigr ) \la_s, \quad \eta = \Bigl (\frac{h^3}{24} + h \ell_s^2 \Bigr ) \mu_s.
\end{align}
The analysis in \cite{Rodriguezetal25} obtained analogous surface energies from the three-dimensional perspective through similar asymptotic arguments for shells and for Toupin–Mindlin energies \cite{Toupin64, Mindlin64a} more general than \eqref{eq:aifantis}. Strong ellipticity of the surface energies used in \cite{rodriguez2024elastic} and \eqref{eq:surfaceenergy} played a crucial role in guaranteeing the boundedness of the stresses and strains predicted by the theory, in contrast to the $\ell_s=0$ limit corresponding to the Steigmann-Ogden model. 

\subsection{Main results and outline}

The present work applies the linearized surface-substrate model of \cite{rodriguez2024elastic} with surface energy \eqref{eq:surfaceenergy} to the mixed mode-I/mode-II loading of a finite straight crack in an infinite isotropic plate. Using complex-analytic methods, we demonstrate that the stresses and strains remain bounded up to the crack tips for nearly all admissible parameter values. Together with the results for mode-III loading \cite{rodriguez2024elastic, rodriguez2024midsurface}, this establishes that linearized surface-substrate models of fracture with surface strain-gradient elasticity regularizes all principal modes of far-field fracture loading. Thus, the resulting theory provides a unified physically consistent framework for modeling brittle fracture with bounded stress and strain fields.

Section 2 formulates the linearized surface-substrate model, defining the stored energies and deriving the governing equations for far field, mixed-mode, plane strain crack loading via a principle of virtual work. Particular attention is given to the tip conditions, which play a central role in ensuring bounded stresses and strains up to the crack termini. These include the cusp-closing conditions, the continuity of stresses across the crack tips, and the vanishing of resultant surface couples, a physically reasonable requirement for both plane strain and antiplane shear configurations. Section 3 reformulates the mixed-mode boundary-value problem using complex-analytic potentials, leading to a system of coupled singular integro-differential equations. Section 4 reduces this system to a Fredholm-type integral formulation and establishes existence, uniqueness, and sufficient regularity of solutions to guarantee the boundedness of the stress and strain fields up to the crack tips (modulo a countable set of values of the model's parameters). Section 5 presents numerical results based on Chebyshev polynomial approximations and compares the outcomes across parameter regimes, illustrating the quantitative influence of the surface strain-gradient contribution to the energy on the model.

\subsection*{Acknowledgements} C. R. gratefully acknowledges support of NSF DMS-2307562. A. Z. gratefully acknowledges support of Simons Foundation, award numbers 713080 and SFI-MPS-TSM-00013162.

\section{Preliminaries}

In this section, we introduce the linearized surface-substrate model used throughout the paper. We present the substrate and surface stored energies and their infinitesimal displacement approximations. The governing field equations for the mixed mode-I/mode-II loading of a finite straight crack are obtained via a principle of virtual work, and the tip conditions needed to guarantee bounded stress and strain fields are stated explicitly and justified on physical grounds.

\subsection{Green elastic bodies with strain-gradient elastic boundary surfaces} We now briefly review the foundations of the surface-substrate theory from \cite{rodriguez2024elastic} for equilibrium states of a Green elastic body $\mathcal B \subset \mathbb R^3$ with a surface strain-gradient elastic boundary surface $\mathcal S \subseteq \p \mathcal B$.\footnote{In this work, we identify three-dimensional Euclidean space with $\bbR^3$ via a choice of origin and a fixed orthonormal basis $\{\bs e_1, \bs e_2, \bs e_3\}$.}  

Let $\boldsymbol x: U \rar \mathcal S$ be a local parameterization of $\mathcal S$. Here $U \subseteq \mathbb R^2$ is an open set with coordinates $(\theta^\al)$. We denote the corresponding natural basis vectors by $\boldsymbol A_\al := \boldsymbol x_{,\al}$ and the dual basis vectors by $\boldsymbol A^\al$ respectively. Here $\mbox{}_{,\al} = \p_{\theta^\al}$. The projection onto the tangent plane of $\mathcal S$ is denoted by $\bs 1$ and given by
\begin{align}
	\bs 1 = \bs A_\alpha \otimes \bs A^\alpha. 
\end{align}
The components of the metric are given by 
\begin{align}
	A_{\al \beta} := \boldsymbol A_{\al} \cdot \boldsymbol A_{\beta},
\end{align}  
the determinant of $(A^{\al \beta})$ is denoted by $A$, and the dual components are denoted by $(A^{\al \beta})$, so $A_{\al \gamma}A^{\gamma \beta} = \delta_\al^\beta$. The components of tensors are raised and lowered using $(A^{\al\beta})$ and $(A_{\al \beta})$ respectively. For a vector field $\boldsymbol u$ on $\mathcal S$ we define  
\begin{align}
\boldsymbol u_{\al|\beta} := \boldsymbol u_{,\al\beta} - \Gamma^{\gamma}_{\al \beta} \boldsymbol u_{,\gamma},
\end{align} 
where $(\Gamma^\gamma_{\beta\al})$ are the Christoffel symbols associated to $(A_{\al\beta})$. For a scalar valued function $\varphi$ on $\mathcal S$, its surface gradient is defined via $$\nabla_s \varphi := \varphi_{,\al} \boldsymbol A^\al,$$ and for a vector or tensor field $\boldsymbol B$ on $\mathcal S$, its surface gradient is defined via  
\begin{align}
	\nabla_s \boldsymbol B := \boldsymbol B_{,\al} \tens \boldsymbol A^\al. 
\end{align}
The surface divergence $\div_s$ and surface Laplacian $\Delta_s$ are then defined in terms of standard combinations of traces and surface gradients. 

Let $\boldsymbol \chi: \mathcal B \rar \mathbb R^3$ be a smooth invertible deformation with deformation gradient $\boldsymbol F$, inducing a smooth embedding $\boldsymbol y : \mathcal S \rar \mathbb R^3$. The surface strain tensor is defined via
\begin{align}
	\boldsymbol E_s := \frac{1}{2}(\boldsymbol y_{,\al} \cdot \boldsymbol y_{,\beta} - A_{\al \beta}) \boldsymbol A^\al \tens \boldsymbol A^\beta, 
\end{align} 
and the relative normal curvature tensor is given by 
\begin{align}
	\boldsymbol \rho := (\boldsymbol n \cdot \boldsymbol y_{,\al \beta} - \boldsymbol N \cdot \boldsymbol x_{,\al\beta}) \boldsymbol A^\al \tens \boldsymbol A^\beta, 
\end{align}
where $\boldsymbol n$ is the outward unit normal vector field on the convected surface $\boldsymbol \chi(\mathcal S)$ and $\boldsymbol N$ is the outward unit normal vector field on $\mathcal S$. This induces an orientation of $\mathcal S$ by requiring that $\bs N = |\bs A_1 \times \bs A_2|^{-1} \bs A_1 \times \bs A_2$. 

For the associated displacement vector field $\boldsymbol u(\boldsymbol p) := \boldsymbol \chi(\boldsymbol p) - \boldsymbol p$ on $\mathcal B$, we define the infinitesimal strain, surface strain, and relative normal curvature tensors via 
\begin{align}
		\bs \eps = \frac{1}{2}(\nabla \bs u + (\nabla \bs u)^T), \quad \boldsymbol \eps_s = \frac{1}{2}\bigl (\boldsymbol A_\al \cdot \boldsymbol u_{,\beta} + \boldsymbol u_{,\al} \cdot \boldsymbol A_{,\beta} \bigr ) \boldsymbol A^\al \tens \boldsymbol A^\beta, \quad 
	\boldsymbol \kappa_s = \boldsymbol N \cdot \boldsymbol u_{,\al|\beta} \boldsymbol A^\al \tens \boldsymbol A^\beta. 
\end{align}

We assume that the substrate $\mathcal B$ is modeled by the small-strain quadratic isotropic stored energy
\begin{align}
	W = \frac{\la}{2}(\mathrm{tr}\, \boldsymbol E)^2 + \mu|\boldsymbol E|^2,
\end{align}
and that the surface $\mathcal S$ is endowed with the strain-gradient surface energy incorporating surface tension:
\begin{align}
	U &= \sigma_0 J_{\bs 1 + 2 \bs E_s}^{1/2} + \frac{\la_s}{2}(\mathrm{tr}\, \boldsymbol E_s)^2 + \mu_s |\boldsymbol E_s|^2 + \ell_s^2 \Bigl [ \frac{\la_s}{2}|\nabla_s \mathrm{tr}\, \boldsymbol E_s|^2 + \mu_s |\nabla_s \boldsymbol E_s|^2 \Bigr ]
	+ \frac{\zeta}{2}(\mathrm{tr}\, \boldsymbol \rho)^2 + \eta |\boldsymbol \rho|^2. \label{eq:nonlinsurf}
\end{align}
Here, $\sigma_0\in\mathbb R$ denotes the surface tension constant and $J_{\bs a}$ is the Jacobian of a tensor field $\bs a$ on $\mathcal S$. 

If we assume that the displacement satisfies 
\begin{align}
	\max_{\cl B} |\nabla \bs u| + L \max_{\mathcal S} |\nabla_s \nabla_s \bs u| := \delta \ll 1, \label{eq:infdis}
\end{align}
where is $L$ is a fixed length scale, it follows that, up to $o(\delta^2)$ terms,
\begin{align}
	W &= \frac{\la}{2}(\mathrm{tr}\, \boldsymbol \eps)^2 + \mu|\boldsymbol \eps|^2, \label{eq:infbulk} \\
\begin{split}
	U &= \sigma_0(1 + \div_s \bs \eps_s) + \frac{\la_s+\sigma_0}{2}(\mathrm{tr}\, \boldsymbol \eps_s)^2 + (\mu-\sigma_0) |\boldsymbol \eps_s|^2 + \frac{\sigma_0}{2}|\nabla_s \bs u|^2 
	\\ &+\ell_s^2 \Bigl [ \frac{\la_s}{2}|\nabla_s \mathrm{tr}\, \boldsymbol \eps_s|^2 + \mu_s |\nabla_s \boldsymbol \eps_s|^2 \Bigr ] + \frac{\zeta}{2}(\mathrm{tr}\, \boldsymbol \kappa_s)^2 + \eta |\boldsymbol \kappa_s|^2.
\end{split} \label{eq:infsurf}
\end{align}
When $\ell_s = 0$ and $\zeta = \eta = 0$, \eqref{eq:infsurf} reduces to the classical Gurtin-Murdoch surface energy, and when only $\ell_s=0$, one recovers the infinitesimal Steigmann-Ogden model, which includes resistance to flexure but no surface strain-gradient terms. In this study, we assume that 
\begin{align}
\zeta + 2 \eta > 0, \quad \lambda_s + 2 \mu_s > 0, \quad \ell_s > 0. 
\end{align}

In anticipation of the plane strain setting considered in this work, we specialize the surface energy \eqref{eq:infsurf} to planar deformations. For plane strain displacements of the form
\[
\boldsymbol u = u_1(x_1,x_2)\boldsymbol e_1 + u_2(x_1,x_2)\boldsymbol e_2
\]
and for a material surface $\mathcal S$ lying in the $x_1$-$x_3$ plane, a direct calculation shows that \eqref{eq:infsurf} reduces to
\begin{align}
	U = \sigma_0 (1 \pm u_{1,1}) 
	+ \frac{\la_s+2\mu_s}{2} \Bigl [ (u_{1,1})^2 + \ell_s^2 (u_{1,11})^2 \Bigr ]
	+ \frac{\sigma_0}{2} (u_{2,1})^2 + \frac{\zeta + 2\eta}{2} (u_{2,11})^2, \label{eq:simpsurfen}
\end{align}
where the choice of sign corresponds to the orientation of the surface normal, $\boldsymbol N = \mp \boldsymbol e_2$.

\subsection{Formulation of the mixed-mode problem}
In this work, we study a finite crack of length $2\ell$ lying along the interval $[-\ell, \ell]$ in an infinite brittle\footnote{The adjective ``brittle" indicates that we are only considering infinitesimal displacement gradients and the linearized theory derived under the assumption \eqref{eq:infdis}.} plate $\mathcal B = \mathbb{R}^2 \times [-H,H]$, under far-field mixed-mode loading conditions in plane strain
\begin{align}
	\boldsymbol u = u_1(x_1, x_2)\boldsymbol e_1 + u_2(x_1, x_2) \boldsymbol e_2.
\end{align}
The far-field loading conditions are expressed via 
\begin{align}
	\lim_{|(x_1,x_2)| \rar \infty} \boldsymbol \sigma(\boldsymbol x) = \boldsymbol \sigma^\infty
\end{align}
where 
\begin{align}
	\bs \sigma = \la (\tr \bs \eps) \bs I + 2\mu \bs \eps,
\end{align}
is the Cauchy stress tensor. 
The two crack faces, approached as $y \to 0^\pm$, are denoted by $\mathcal S_{\pm} := [-\ell, \ell] \times \{0^\pm\} \times [-H,H]$. Their union is endowed with the surface stored energy \eqref{eq:infsurf}; see \eqref{eq:simpsurfen} for plane strain. For simplicity, we assume that no external tractions are applied on the crack surfaces. The endpoints of the crack, $\{\pm \ell\} \times \{0\} \times [-H,H]$, will be referred to as the \textit{crack tips}, consistent with the two-dimensional reduction from the plane strain assumption.

We now derive the governing field equations via a principle of virtual work.
\begin{prop}\label{p:1}
Let $\mathcal P \subset \bbR^2$ be an open set with piecewise smooth boundary and $[-\ell,\ell] \subset \mathcal P$. We say that $\boldsymbol u: \mathcal P \rar \bbR^2$ is kinematically admissible on $\mathcal P$ and write $\boldsymbol u \in \mathcal A(\mathcal P)$ if: 
\begin{enumerate}
	\item $\boldsymbol u \in C^2(\mathcal P \backslash [-\ell,\ell])$. 
	\item $\boldsymbol u|_{\mathcal P \cap \{\pm x_2 > 0\}}$ extend to $C^2$ functions $\boldsymbol u^\pm$ on $\mathcal P \cap \{\pm x_2 \geq 0 \}$ and $\boldsymbol u^{\pm} |_{[-\ell,\ell]} \in C^4([-\ell,\ell]).$
	\item $\boldsymbol u^\pm$ satisfy the cusp conditions
\begin{align}
	\boldsymbol u^{+}(\pm \ell, 0) &= \boldsymbol u^-(\pm \ell,0), \label{eq:6} \\
	\boldsymbol u^{+}_{,1}(\pm \ell,0) &= \boldsymbol u_{,1}^+(\pm \ell,0).   \label{eq:7}
\end{align}  
\end{enumerate}
 For $\boldsymbol u \in \mathcal A(\mathcal P)$, define the total energy $E(\cdot; \mathcal P)$, edge forces $\boldsymbol t_e$ (at $t = \pm \ell)$, and edge double forces $\boldsymbol m_e$ (at $t = \pm \ell$) via 
\begin{align}
	E(\boldsymbol u; \mathcal P) 
	&:= \int_{\mathcal P \backslash [-\ell,\ell]} \Bigr [ \frac{\la}{2} (\tr \boldsymbol \eps)^2 + \mu|\boldsymbol \eps|^2 \Bigr ] dA + \sigma_0 \sum_{\pm} \int_{-\ell}^\ell(1 \pm u_{1,1}^\pm) dt \\ &+ \sum_{\pm } \int_{-\ell}^\ell \frac{\la_s+2\mu_s}{2} \Bigl [ (u_{1,1}^\pm )^2 + \ell_s^2 (u_{1,11}^\pm)^2 \Bigr ] dt
	+ \sum_{\pm } \int_{-\ell}^\ell \Bigl [\frac{\sigma_0}{2} (u_{2,1}^\pm)^2 + \frac{\zeta + 2\eta}{2} (u_{2,11}^\pm)^2 \Bigr ] dt, \\
	\boldsymbol t_e &:= \sum_{\pm} \Bigl [(\la_s + 2\mu_s)(u_{1,1}^\pm - \ell_s^2 u_{1,111}^\pm)\boldsymbol e_1 + (\sigma_0 u_{2,1}^\pm - (\zeta + 2\eta) u_{2,111}^\pm ) \boldsymbol e_2\Bigr ], \label{eq:edgef}\\
	\boldsymbol m_e &:=  \sum_{\pm} \Bigl [ \ell_s^2(\la_s + 2\mu_s) u_{1,11}^\pm \boldsymbol e_1 + (\zeta + 2\eta) u_{2,11}^\pm \boldsymbol e_2 \Bigr ]. \label{eq:edgedoublef}
\end{align}
Then $\boldsymbol u \in \mathcal A(\bbR^2)$ satisfies the principle of virtual work that for all bounded $\mathcal P$, for all $\boldsymbol v \in \mathcal A(\cl P)$,
\begin{align}
	\frac{d}{d\eps}E(\boldsymbol u + \eps \boldsymbol v; \mathcal P) \Big |_{\eps = 0} &= \int_{\p \mathcal P} \boldsymbol \sigma\boldsymbol N \, \cdot \boldsymbol v \,ds + \boldsymbol t_e \cdot \boldsymbol v \big |_{t = -\ell}^{t=\ell} + \boldsymbol m_e \cdot \boldsymbol v_{,1} \big |_{t = -\ell}^{t = \ell}, \label{eq:p11}
\end{align}
where $\boldsymbol N$ is the outward unit normal vector field on $\p \mathcal P$, if and only if the following governing field equations are satisfied:
\begin{align}
	\div \, \boldsymbol \sigma &= \boldsymbol 0, \quad \mbox{on } \mathbb R^2 \backslash [-\ell,\ell], \label{eq:3} \\
		\sigma_{12}^\pm(t,0) &= \mp(\la_s + 2\mu_s)\bigl (u_{1,11}^{\pm}(t,0) - \ell_s^2 u_{1,1111}^{\pm}(t,0) \bigr ), \label{eq:4}\\
		\sigma_{22}^\pm(t,0) &= \pm(\zeta + 2\eta) u_{2,1111}^{\pm}(t,0)\mp \sigma_0 u_{2,11}^{\pm}(t,0), \quad t \in [-\ell,\ell], \label{eq:5} 
\end{align}
where, as before, $\pm$ denotes the limiting values approaching $[-\ell,\ell]$ as $y \rar 0^{\pm}$. 
\end{prop}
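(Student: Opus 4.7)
The plan is to establish both directions by computing the first variation $\frac{d}{d\eps} E(\bs u + \eps \bs v; \cl P)|_{\eps = 0}$ termwise, applying integration by parts to shift all derivatives off $\bs v$, and then identifying the Euler-Lagrange equations and natural boundary conditions that emerge on $[-\ell,\ell]$. For the bulk contribution, the divergence theorem on $\cl P \setminus [-\ell,\ell]$ would yield the interior term $-\int \div \bs \sigma \cdot \bs v \, dA$, a boundary integral over $\p \cl P$, and two integrals over the crack faces. Since the outward unit normal to the bulk on $\cl S_\pm$ is $\mp \bs e_2$, the crack-face bulk contribution integrated over $[-\ell,\ell]$ carries a coefficient $\mp \sigma_{i2}^\pm$ multiplying $v_i^\pm$.

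Each one-dimensional surface-energy integrand along $[-\ell,\ell]$ is quadratic either in $u_{\cdot,1}^\pm$ or in $u_{\cdot,11}^\pm$, so one or two integrations by parts in $t$ produce, respectively, an interior coefficient against $v_i^\pm$ and boundary contributions at $t = \pm \ell$. Adding these to the bulk tractions and invoking the standard variational argument --- first choosing $\bs v$ supported away from $[-\ell,\ell]$ to recover \eqref{eq:3}, then localizing $\bs v$ near each crack face with independently chosen $v_1^\pm$ and $v_2^\pm$ to recover \eqref{eq:4} and \eqref{eq:5} --- should produce the stated field equations. The linear surface-tension term $\sigma_0(1 \pm u_{1,1}^\pm)$ generates only a boundary contribution $\pm \sigma_0 [v_1^\pm]_{-\ell}^\ell$; summed over $\pm$ this becomes $\sigma_0 [v_1^+ - v_1^-]_{-\ell}^\ell$, which vanishes by the cusp condition \eqref{eq:6}.

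The remaining tip contributions should assemble as follows: the boundary terms proportional to $v_i^\pm$ would carry coefficients $(\la_s + 2\mu_s)(u_{1,1}^\pm - \ell_s^2 u_{1,111}^\pm)$ and $\sigma_0 u_{2,1}^\pm - (\zeta + 2\eta) u_{2,111}^\pm$, while those proportional to $v_{i,1}^\pm$ would carry $\ell_s^2(\la_s+2\mu_s) u_{1,11}^\pm$ and $(\zeta + 2\eta) u_{2,11}^\pm$. Using $v_i^+ = v_i^-$ and $v_{i,1}^+ = v_{i,1}^-$ at the tips from \eqref{eq:6}-\eqref{eq:7}, these combine across the two faces into $\bs t_e \cdot \bs v |_{-\ell}^{\ell} + \bs m_e \cdot \bs v_{,1} |_{-\ell}^\ell$, matching \eqref{eq:edgef}-\eqref{eq:edgedoublef}. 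The converse implication then follows by reversing the chain of integrations by parts and using the cusp conditions on $\bs v$ one more time. The main difficulty I anticipate is the careful sign bookkeeping across the two crack faces --- in particular, the interplay of the $\pm$ in $\sigma_0(1 \pm u_{1,1}^\pm)$, the $\mp$ arising from the bulk traction, and the cusp conditions --- since without \eqref{eq:6} the tip contribution of the surface tension would not cancel and the identification with $\bs t_e$ and $\bs m_e$ would fail.
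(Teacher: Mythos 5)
Your proposal is correct and follows essentially the same route as the paper: compute the first variation termwise, use the divergence theorem on $\mathcal P\setminus[-\ell,\ell]$ to expose $-\div\bs\sigma$ and the crack-face tractions with the normal $\mp\bs e_2$, integrate the one-dimensional surface terms by parts once or twice, note that the surface-tension contribution drops out by the cusp condition \eqref{eq:6}, and apply the fundamental lemma of the calculus of variations to read off \eqref{eq:3}--\eqref{eq:5} together with the tip terms $\bs t_e$, $\bs m_e$. The only cosmetic difference is that you cancel the surface-tension tip contribution via the cusp condition on $\bs v$, whereas the paper observes up front that $\sigma_0\sum_\pm\int_{-\ell}^\ell(1\pm u_{1,1}^\pm)\,dt$ is constant on $\mathcal A(\mathcal P)$ by \eqref{eq:6}; these are the same observation phrased in two ways.
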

\begin{proof}
We first note that continuity condition \eqref{eq:6} and the fundamental theorem of calculus imply that 
\begin{align}
	\sigma_0 \sum_{\pm} \int_{-\ell}^\ell(1 \pm u_{1,1}^\pm) dt = 2 \sigma_0 \ell.
\end{align}	
Via standard integration by parts arguments and the facts that $\boldsymbol u \in \mathcal A(\mathcal P), \boldsymbol v \in \mathcal A(\mathcal P)$, we then conclude that  
\begin{align}
	\dot E &= \int_{\mathcal P\backslash[-\ell,\ell]} -\div \boldsymbol \sigma \cdot \boldsymbol v \, dA + \int_{\p \mathcal P} \boldsymbol \sigma \boldsymbol N \cdot \boldsymbol v \, ds \\ 
	&+ \sum_{\pm} \int_{-\ell}^\ell \Bigl [ \mp \boldsymbol \sigma \boldsymbol e_2 \cdot \boldsymbol v^{\pm} - (\lambda_s + 2\mu_s)(u_{1,11}^\pm - \ell_s^2 u_{1,1111}^{\pm}) v_1 - [\sigma_0 u_{,11} - (\zeta+2\eta)u_{2,1111}^\pm] v_2 \Bigr ] dx_1 \\
	&+\Bigl [ \sum_{\pm} (\la_s + 2\mu_s)(u_{1,1}^\pm - \ell_s^2 u_{1,111}^\pm) \Bigr ] v_1 \Big |_{t = -\ell}^{t=\ell} + \Bigl [\sum_{\pm} (\sigma_0 u_{2,1}^\pm - (\zeta + 2\eta) u_{2,111}^\pm ) \Bigr ] v_2 \Big |^{t = \ell}_{t = -\ell} \\ 
	&+ \Bigl [ \sum_{\pm} \ell_s^2 (\la_s + 2\mu_s) u_{1,11}^\pm \Bigr ]v_{1,1} \Big |_{t = -\ell}^{t = \ell} + \Bigl [\sum_{\pm} (\zeta + 2\eta) u_{2,11}^\pm ) \Bigr ] v_{2,1} \Big |^{t = \ell}_{t = -\ell} \\
	&= \int_{\mathcal P \backslash[-\ell,\ell]} -\div \boldsymbol \sigma \cdot \boldsymbol v \, dA + \int_{\p \mathcal P} \boldsymbol \sigma \boldsymbol N \cdot \boldsymbol v\, ds\\ 
	&+ \sum_{\pm} \int_{-\ell}^\ell \Bigl [ \mp \boldsymbol \sigma \boldsymbol e_2 \cdot \boldsymbol v^{\pm} - (\lambda_s + 2\mu_s)(u_{1,11}^\pm - \ell_s^2 u_{1,1111}^{\pm}) v_1 - [\sigma_0 u_{2,11}^{\pm} - (\zeta+2\eta)u_{2,1111}^\pm] v_2 \Bigr ] dx_1 \\
	&+ \boldsymbol t_e \cdot \boldsymbol v \big |_{t = -\ell}^{t=\ell} + \boldsymbol m_e \cdot \boldsymbol v_{,1} \big |_{t = -\ell}^\ell.
\end{align}
The result now follows from applying the fundamental theorem of the calculus of variations and the arbitrariness of $\cl P$. 
\end{proof}

We now derive balance laws from our principle of virtual work. Let $\mathcal P \subset \bbR^2$ be a bounded open set with piecewise smooth boundary and $[-\ell,\ell] \subset \mathcal P^\circ$. Let $\boldsymbol a \in \bbR^2$, $b \in \bbR$, and define 
\begin{align}
	\boldsymbol v = \boldsymbol a - b x_2 \boldsymbol e_1 + b x_1 \boldsymbol e_2 \in \mathcal A(\mathcal P).  
\end{align}
If $\boldsymbol u$ satisfies \eqref{eq:p11}, then  
\begin{align}
	\boldsymbol 0 = \Bigl [ \int_{\p \mathcal P} \boldsymbol \sigma \boldsymbol N \, ds + \boldsymbol t_e |_{t = -\ell}^{t=\ell} \Bigr ] \cdot \boldsymbol a + 
	\Bigl [
	\Bigl ( \int_{\p \mathcal P} \boldsymbol x \times \boldsymbol \sigma \boldsymbol N \, ds + \boldsymbol x \times \boldsymbol t_e \Big |_{x_1 = t = -\ell, x_2 = 0}^{x_1 = t=\ell, x_2 = 0} \Bigr ) \cdot \boldsymbol e_3 + \boldsymbol m_e \cdot \boldsymbol e_2 \big |_{t = -\ell}^{t = \ell} 
	\Bigr ]b
\end{align}
Since $\boldsymbol a$ and $b$ are arbitrary and the cross product of two vectors in the plane is always parallel to $\boldsymbol e_3$, we conclude that there hold \textit{balance of forces} for parts containing the crack
\begin{align}
	\boldsymbol 0 = \int_{\p \mathcal P} \boldsymbol \sigma \boldsymbol N \, ds + \boldsymbol t_e |_{t = -\ell}^{t=\ell}, 
\end{align}
and \textit{balance of moments} for parts containing the crack
\begin{align}
	\boldsymbol 0 = \int_{\p \mathcal P} \boldsymbol x \times \boldsymbol \sigma \boldsymbol N \, ds + \boldsymbol x \times \boldsymbol t_e \Big |_{x_1 = t = -\ell, x_2 = 0}^{x_1 = t=\ell, x_2 = 0} + (\boldsymbol e_3 \tens \boldsymbol e_2)\boldsymbol m_e \big |_{t = -\ell}^{t = \ell}. \label{eq:mombalance}
\end{align} 
The term $(\boldsymbol e_3 \tens \boldsymbol e_2)\boldsymbol m_e \big |_{t = -\ell}^{t = \ell}$ represents a couple parallel to $\boldsymbol e_3$, the direction of the crack edge.

Since the equations \eqref{eq:4} and \eqref{eq:5} involve fourth-order differential operators acting on $\boldsymbol u^{\pm}$, we require 16 linearly independent scalar conditions to ensure well-posedness.
We note that at $(x_1, x_2) = (\pm \ell,0)$, \eqref{eq:6} and \eqref{eq:7} can be represented as 
\begin{align}
	\begin{bmatrix}
		1 &-1 &0 &0 &0 &0 &0 &0 \\
		1 &-1 &0 &0 &0 &0 &0 &0 \\
		0 &0 &1 &-1 &0 &0 &0 &0 \\
		0 &0 &1 &-1 &0 &0 &0 &0 \\
		0 &0 &0 &0 &1 &-1 &0 &0 \\
		0 &0 &0 &0 &1 &-1 &0 &0 \\
		0 &0 &0 &0 &0 &0 &1 &-1 \\
		0 &0 &0 &0 &0 &0 &1 &-1
	\end{bmatrix}
	\begin{bmatrix}
		u^+_1 \\
		u^-_1 \\
		u^+_2 \\
		u^-_2 \\
		u_{1,1}^+ \\
		u_{1,1}^- \\
		u_{2,1}^+ \\
		u_{2,1}^- \\
	\end{bmatrix} = 
	\boldsymbol 0. \label{eq:boundarymatrix}
\end{align}
Thus, only eight total linearly independent scalar conditions at the two tips have been prescribed, and eight more conditions are needed. In addition, we require that the stresses be continuous at the crack tips: 
\begin{align}
	\sigma_{12}^+(\pm \ell, 0) = \sigma_{12}^-(\pm \ell,0), \quad \sigma_{22}^+(\pm \ell, 0) = \sigma_{22}^-(\pm \ell,0), \label{eq:stress}
\end{align} 
which supplies four more linearly independent scalar conditions. Finally, we require that the \textit{double forces at the crack tips vanish}: 
\begin{align}
	\boldsymbol{m}_e = \boldsymbol 0 \quad \mbox{at } t = \pm \ell. \label{eq:couple1}
\end{align}
Imposing condition \eqref{eq:couple1} ensures that the crack edge undergoes no additional twisting beyond that induced by the edge force $\boldsymbol{t}_e$ at $t = \pm \ell$ (see \eqref{eq:mombalance}) and is equivalent to
\begin{align}
	\boldsymbol u^+_{,11}(\pm \ell, 0) + \boldsymbol u^-_{,11}(\pm \ell, 0) = \boldsymbol 0. \label{eq:couple}
\end{align}

\section{The mixed-mode loading problem from the complex analytic perspective}

This section reformulates the mixed-mode crack problem using complex potentials. By expressing the fields in terms of analytic functions of a complex variable, the governing boundary-value problem is reduced to a coupled system of singular integro-differential equations for the jumps in traction and displacement derivatives across the crack. This representation isolates the analytical structure needed for application of the Fredholm alternative later.

\subsection{Savruk complex potentials} 
The stresses and derivatives of the displacements $u_1, u_2$ along $L:=[-\ell,\ell]$ can be expressed using two complex potentials \cite{muskhelishvili1953some}, $\Phi$ and $\Psi$, which results in (see, e.g., Section 1.1 of \cite{savruk2016stress})
\begin{align}
	(\sigma_{22} - i \sigma_{12})(t) &= \Phi(t) + \overline{\Phi(t)} + \frac{\overline{dt}}{dt}(t \overline{\Phi'(t)} + \overline{\Psi(t)}), \label{eq:10}\\
	2\mu \frac{d}{dt}(u_1 + i u_2)(t) &= \kappa \Phi(t) - \overline{\Phi(t)} - \frac{\overline{dt}}{dt}(t \overline{\Phi'(t)} + \overline{\Psi(t)}), \quad t \in L,   \label{eq:11}
\end{align} 
where we have omitted which side of the curve we are evaluating along. Here, $\kappa = 3-4\nu$ where $\nu$ is Poisson's ratio for the bulk material.  


The functions $\Phi$ and $\Psi$ are analytic on $\bbC \backslash L$ and can be expressed using Cauchy integrals. We adopt the form proposed by Savruk
\begin{align}
	\Phi(z) &:= \Gamma + \frac{1}{2\pi} \int_L \Bigl ( g'(t) - \frac{2i q(t)}{\kappa +1} \Bigr ) \frac{dt}{t - z}, \\
	\Psi(z) &:= \Gamma' + \frac{1}{2\pi} \int_L \Bigl ( \overline{g'(t)} - \frac{2i\kappa \overline{q(t)}}{\kappa+1}\Bigr ) \frac{\overline{dt}}{t-z} - \frac{1}{2\pi} \int_L \Bigl (g'(t) - \frac{2i q(t)}{\kappa+1} \Bigr ) \frac{\bar t dt}{(t-z)^2}
\end{align} 
where $\Gamma := \frac{1}{4}(\sigma_{11}^\infty + \sigma_{22}^\infty)$ and $\Gamma':= \frac{1}{2}(\sigma_{22}^\infty-\sigma_{11}^\infty) + i \sigma_{12}^\infty$. Inserting the Savruk representations into \eqref{eq:10} and \eqref{eq:11} yields 
\begin{align}
\begin{split}
	(\sigma_{22} - i \sigma_{12})^{\pm}(t) &= \pm q(t) + \frac{1}{2\pi} \int_L \Bigl ( \frac{2}{\tau - t} + k_1(t,\tau) \Bigr )g'(\tau) \\ 
	&+ \frac{1}{2\pi} \int_L k_2(t,\tau) \overline{g'(\tau) d\tau}
	+ \frac{1}{\pi i (\kappa+1)} \int_L \Bigl (- \frac{\kappa-1}{\tau - t} - \kappa k_1(t,\tau)) q(\tau) d\tau \\ &- \frac{1}{\pi i (\kappa+1)} \int_L k_2(t,\tau) \overline{q(\tau)d\tau}
	+ 2 \mathrm{Re}\, \Gamma + \bar \Gamma'\frac{\overline{dt}}{dt},
\end{split} \label{eq:12} \\
\begin{split}
	2\mu \frac{d}{dt}(u_1 + i u_2)^\pm(t) &= \pm\frac{i(\kappa+1)}{2} g'(t) + \frac{1}{2\pi} \int_L \Bigl (\frac{\kappa-1}{\tau - t} - k_1(t,\tau) \Bigr ) g'(\tau) d\tau \\
	&- \frac{1}{2\pi} \int_L k_2(t,\tau) \overline{g'(\tau)d\tau} + 
	\frac{\kappa}{\pi i(\kappa+1)} \int_L \Bigl ( \frac{2}{\tau - t} + k_1(t,\tau) \Bigr ) q(\tau) d\tau \\
	&+ \frac{1}{\pi i(\kappa +1)} \int_L k_2(t,\tau) \overline{q(\tau)d\tau} + \kappa \Gamma - \bar{\Gamma} - \bar\Gamma' \frac{\overline{dt}}{dt}, \quad t \in L, 
\end{split}\label{eq:13}
\end{align}
where 
\begin{align}
	k_1(t,\tau) = \frac{d}{dt}\log \frac{\tau - t}{\bar \tau - \bar t}, \quad k_2(t,\tau) = -\frac{d}{dt} \frac{\tau - t}{\bar \tau - \bar t}. 
\end{align}
Since $L = [-\ell,\ell]$, $\bar t = t$ and thus 
\begin{align}
\begin{split}
(\sigma_{22}-i\sigma_{12})^{\pm}(t) &= \pm q(t) + \frac{1}{\pi} \int_L \frac{g'(\tau)}{\tau - t}d\tau - \frac{\kappa-1}{\pi i (\kappa+1)} \int_L \frac{q(\tau)}{\tau - t}d\tau \\
&+ 2 \mathrm{Re}\, \Gamma + \bar \Gamma', 
\end{split}\label{eq:14} \\
\begin{split}
2\mu \frac{d}{dt}(u_1 + i u_2)^{\pm}(t) &= \pm \frac{i(\kappa+1)}{2}g'(t) + \frac{\kappa-1}{2\pi} \int_L \frac{g'(\tau)}{\tau - t}d\tau \\ 
&+ \frac{2\kappa}{\pi i(\kappa+1)} \int_L \frac{q(\tau)}{\tau - t}d\tau 
+ \kappa \Gamma - \bar \Gamma - \bar \Gamma', \quad t \in L.
\end{split} \label{eq:15}
\end{align}

Equations \eqref{eq:12} and \eqref{eq:13} imply that the functions $q(t)$ and $g'(t)$ represent the jumps in the stress and displacement gradient fields across the crack $L$:
\begin{align}
2q(t) &= (\sigma_{22} - i\sigma_{12})^+(t) - (\sigma_{22} - i\sigma_{12})^-(t), \label{eq:8}\\
\frac{i(\kappa+1)}{2\mu} g'(t) &= \frac{d}{dt}(u_1 + i u_2)^+(t) - \frac{d}{dt}(u_1 + i u_2)^-(t), \quad t \in L. \label{eq:9}
\end{align}
This identification provides $q$ and $g'$ with a clear physical interpretation as measures of stress and strain discontinuities along the crack faces. Moreover, integrating \eqref{eq:9} from $-\ell$ to $t$ and using \eqref{eq:6}, we see that 
\begin{align}
 \frac{i(\kappa+1)}{2\mu} g(t) &= (u_1 + i u_2)^+(t) - (u_1 + i u_2)^-(t), \quad t \in L,
\end{align}
and thus, $g(t)$ represents the jump in displacement along the crack. 

\subsection{The boundary conditions in terms of Cauchy integrals}

Adding and subtracting (with respect to $\pm$) the two boundary conditions \eqref{eq:4} and \eqref{eq:5} on the upper and lower crack faces, we obtain on $[-\ell,\ell]$,
\begin{align}
	\sigma_{12}^+ + \sigma_{12}^{-} &= 
	-(\lambda_s + 2\mu_s)(u_{1,11}^+-u_{1,11}^-) + \ell^2_s (\lambda_s + 2\mu_s)(u_{1,1111}^+-u_{1,1111}^-), \label{eq:16}\\
	\sigma_{22}^+ + \sigma_{22}^{-} &= 
	-\sigma_0(u_{2,11}^+-u_{2,11}^-) + (\zeta +2\eta) (u_{2,1111}^+-u_{2,1111}^-), \label{eq:17} \\
		\sigma_{12}^+ - \sigma_{12}^{-} &= 
	-(\lambda_s + 2\mu_s)(u_{1,11}^++u_{1,11}^-) + \ell^2_s (\lambda_s + 2\mu_s)(u_{1,1111}^++u_{1,1111}^-), \label{eq:18}\\
	\sigma_{22}^+ - \sigma_{22}^{-} &= 
-\sigma_0(u_{2,11}^++u_{2,11}^-) + (\zeta +2\eta) (u_{2,1111}^++u_{2,1111}^-). \label{eq:19}
\end{align}  
Using \eqref{eq:12} and \eqref{eq:13}, the previous conditions are equivalent to  
\begin{align}
-\mathrm{Im} \Bigl [
\frac{1}{\pi} \int_{-\ell}^\ell \frac{g'(\tau)}{\tau - t}d\tau& - \frac{\kappa-1}{\pi i (\kappa+1)} \int_{-\ell}^\ell \frac{q(\tau)}{\tau - t}d\tau + 2 \mathrm{Re}\, \Gamma + \bar \Gamma' 
\Bigr ] \\
&= \frac{1}{4\mu}(\lambda_s + 2\mu_s)(\kappa+1) \mathrm{Im}\, g''(t) - \frac{\ell_s^2}{4\mu}(\lambda_s + 2\mu_s)(\kappa+1) \mathrm{Im}\, g''''(t), \label{eq:20} \\
\mathrm{Re} \Bigl [
\frac{1}{\pi} \int_{-\ell}^\ell \frac{g'(\tau)}{\tau - t}d\tau& - \frac{\kappa-1}{\pi i (\kappa+1)} \int_{-\ell}^\ell \frac{q(\tau)}{\tau - t}d\tau + 2 \mathrm{Re}\, \Gamma + \bar \Gamma' 
\Bigr ] \\
&= -\frac{1}{4\mu}\sigma_0(\kappa+1) \mathrm{Re}\, g''(t) + \frac{1}{4\mu}(\zeta + 2\eta)(\kappa+1) \mathrm{Re}\, g''''(t), \label{eq:21} \\
-\mathrm{Im}\, q(t) &= \frac{\lambda_s + 2\mu_s}{4\mu}
\Bigl [-\frac{d}{dt} + \ell_s^2 \frac{d^3}{dt^3} \Bigr ] \mathrm{Re} \Bigl [
\frac{\kappa-1}{\pi} \int_{-\ell}^{ \ell} \frac{g'(\tau)}{\tau -t}d\tau + \frac{4\kappa}{\pi i (\kappa+1)} \int_{-\ell}^\ell \frac{q(\tau)}{\tau - t}d\tau 
\Bigr ], \label{eq:22}\\
\mathrm{Re}\, q(t) &=
\Bigl [-\frac{\sigma_0}{4\mu}\frac{d}{dt} + \frac{\zeta + 2 \eta}{4\mu}\frac{d^3}{dt^3} \Bigr ] \mathrm{Im} \Bigl [
\frac{\kappa-1}{\pi} \int_{-\ell}^{ \ell} \frac{g'(\tau)}{\tau -t}d\tau + \frac{4\kappa}{\pi i (\kappa+1)} \int_{-\ell}^\ell \frac{q(\tau)}{\tau - t}d\tau 
\Bigr ] \label{eq:23}. 
\end{align}

We now rescale \eqref{eq:20}, \eqref{eq:21}, \eqref{eq:22}, and \eqref{eq:23}. We introduce 
\begin{align}
	Q(t) = \ell q(\ell t), \quad G(t) = g(\ell t). 
\end{align}
Then \eqref{eq:20}-\eqref{eq:23} are equivalent to the following system on $[-1,1]$: 
\begin{align}
	-\frac{1}{\pi} \int_{-1}^1 \frac{\mathrm{Im}\,G'(\tau)}{\tau - t}d\tau& - \frac{\kappa-1}{\pi (\kappa+1)} \int_{-1}^1 \frac{\mathrm{Re}\, Q(\tau)}{\tau - t}d\tau + \ell \mathrm{Im}\, \Gamma' \\
	&= \gamma_1(\kappa+1) \mathrm{Im}\, G''(t) - \gamma_4(\kappa+1) \mathrm{Im}\, G''''(t), \label{eq:24} \\
	\frac{1}{\pi} \int_{-1}^1 \frac{\mathrm{Re}\, G'(\tau)}{\tau - t}d\tau& - \frac{\kappa-1}{\pi (\kappa+1)} \int_{-1}^1 \frac{\mathrm{Im}\,Q(\tau)}{\tau - t}d\tau + 2 \ell\mathrm{Re}\, \Gamma + \ell \mathrm{Re} \Gamma' \\
	&= -\gamma_2(\kappa+1) \mathrm{Re}\, G''(t) + \gamma_3(\kappa+1) \mathrm{Re}\, G''''(t), \label{eq:25} \\
	-\mathrm{Im}\, Q(t) &=
	\Bigl [-\gamma_1\frac{d}{dt} + \gamma_4 \frac{d^3}{dt^3} \Bigr ] \Bigl [
	\frac{\kappa-1}{\pi} \int_{-1}^{ 1} \frac{\mathrm{Re}\,G'(\tau)}{\tau -t}d\tau + \frac{4\kappa}{\pi (\kappa+1)} \int_{-1}^1 \frac{\mathrm{Im}\, Q(\tau)}{\tau - t}d\tau 
	\Bigr ], \label{eq:26}\\
	\mathrm{Re}\, Q(t) &=
	\Bigl [-\gamma_2 \frac{d}{dt} + \gamma_3 \frac{d^3}{dt^3} \Bigr ] \Bigl [
	\frac{\kappa-1}{\pi} \int_{-1}^{ 1} \frac{\mathrm{Im}\, G'(\tau)}{\tau -t}d\tau - \frac{4\kappa}{\pi (\kappa+1)} \int_{-1}^1 \frac{\mathrm{Re}\, Q(\tau)}{\tau - t}d\tau 
	\Bigr ] \label{eq:27}. 
\end{align}
where
\begin{align}
	\gamma_1 = \frac{2\mu_s + \lambda_s}{4\mu \ell} > 0, \quad \gamma_2 = \frac{\sigma_0}{4\mu \ell}, \quad \gamma_3 = \frac{\zeta + 2\eta}{4\mu \ell^3} > 0, \quad \gamma_4 = \frac{\ell_s^2(2\mu_s + \lambda_s)}{4 \mu \ell^3} > 0. \label{eq:28}
\end{align}
We require the tip conditions
\begin{align}
	G(\pm 1) = 0, \quad G'(\pm 1) = 0, \quad Q(\pm 1) = 0. \label{eq:29}
\end{align}
The first two above ensure the cusp conditions \eqref{eq:6} and \eqref{eq:7} with the final condition in \eqref{eq:29} enforcing continuity of the stresses at the tips; see \eqref{eq:8}, \eqref{eq:9}.

The system \eqref{eq:24}-\eqref{eq:27} can be separated into two decoupled systems which we analyze in the next section:
\begin{align}
	\frac{1}{\pi} \int_{-1}^1 \frac{\mathrm{Re}\, G'(\tau)}{\tau - t}d\tau& - \frac{\kappa-1}{\pi (\kappa+1)} \int_{-1}^1 \frac{\mathrm{Im}\,Q(\tau)}{\tau - t}d\tau + 2 \ell\mathrm{Re}\, \Gamma + \ell \mathrm{Re} \Gamma' \\
	&= -\gamma_2(\kappa+1) \mathrm{Re}\, G''(t) + \gamma_3(\kappa+1) \mathrm{Re}\, G''''(t), \label{eq:32} \\
	-\mathrm{Im}\, Q(t) &=
	\Bigl [-\gamma_1\frac{d}{dt} + \gamma_4 \frac{d^3}{dt^3} \Bigr ] \Bigl [
	\frac{\kappa-1}{\pi} \int_{-1}^{ 1} \frac{\mathrm{Re}\,G'(\tau)}{\tau -t}d\tau + \frac{4\kappa}{\pi (\kappa+1)} \int_{-1}^1 \frac{\mathrm{Im}\, Q(\tau)}{\tau - t}d\tau 
	\Bigr ]
	\label{eq:33},
\end{align}
and
\begin{align}
	-\frac{1}{\pi} \int_{-1}^1 \frac{\mathrm{Im}\,G'(\tau)}{\tau - t}d\tau& - \frac{\kappa-1}{\pi (\kappa+1)} \int_{-1}^1 \frac{\mathrm{Re}\, Q(\tau)}{\tau - t}d\tau + \ell \mathrm{Im}\, \Gamma' \\
&= \gamma_1(\kappa+1) \mathrm{Im}\, G''(t) - \gamma_4(\kappa+1) \mathrm{Im}\, G''''(t), \label{eq:30} \\
	\mathrm{Re}\, Q(t) &=
\Bigl [-\gamma_2 \frac{d}{dt} + \gamma_3 \frac{d^3}{dt^3} \Bigr ] \Bigl [
\frac{\kappa-1}{\pi} \int_{-1}^{ 1} \frac{\mathrm{Im}\, G'(\tau)}{\tau -t}d\tau - \frac{4\kappa}{\pi (\kappa+1)} \int_{-1}^1 \frac{\mathrm{Re}\, Q(\tau)}{\tau - t}d\tau 
\Bigr ]. \label{eq:31}
\end{align}

\section{Analysis of the governing equations}
In this section, we show that the system of singular integro-differential equations \eqref{eq:32},\eqref{eq:33} can be reformulated as a system of integral equations with Hilbert-Schmidt kernels. This reformulation establishes their unique solvability via the Fredholm alternative, except possibly for a countable set of values of the parameters $\gamma_1$, $\gamma_2$, $\gamma_3$, $\gamma_4$. We then prove sufficient regularity of solutions to guarantee the boundedness of the associated stress and strain fields up to the crack tips.  An analogous proof applies to \eqref{eq:30}-\eqref{eq:31} and will be omitted. 

\subsection{Change of variables}
We reformulate \eqref{eq:32}, \eqref{eq:33} in terms of two unknown functions $\chi_1, \chi_2$ satisfying 
\begin{align}
	\chi_1(t) &= \ReG''''(t), \label{eq:34} \\
	\chi_2(t) &= 
	2 \mu \ell^2 (u_{2,11}^+(\ell t) + u_{2,11}^-(\ell t)) \label{eq:38}\\
	&= \frac{d}{dt} \Bigl [
	\frac{\kappa-1}{\pi} \int_{-1}^1 \frac{\ReG'(\tau)}{\tau - t} d\tau + \frac{4\kappa}{\pi(\kappa+1)} \int_{-1}^1 \frac{\ImQ(\tau)}{\tau - t} d\tau. 
	\Bigr ] \label{eq:35} \\
	&= \frac{\kappa-1}{\pi} \int_{-1}^1 \frac{\ReG''(\tau)}{\tau - t} d\tau + \frac{4\kappa}{\pi(\kappa+1)} \int_{-1}^1 \frac{\ImQ'(\tau)}{\tau - t} d\tau. \label{eq:36}
\end{align}
The last equality follows from \eqref{eq:29}. By \eqref{eq:couple} and \eqref{eq:38}, we require that $\chi_2$ satisfy 
\begin{align}
	\chi_2(\pm 1) = 0. \label{eq:40}
\end{align}

We denote the following Green function 
\begin{align}
	\bbG(t,\tau) = 
	\begin{cases}
		\frac{1}{24}(t-1)^2(\tau+1)^2(1 + 2t - 2\tau - t\tau) &\quad \tau \in [-1,t], \\
		\frac{1}{24}(\tau-1)^2(t+1)^2(1 + 2\tau - 2t - t\tau) &\quad \tau \in [t,1], 
	\end{cases} 
\end{align}
satisfying 
\begin{gather}
\bbG_{tttt}(t,\tau) = \delta(t-\tau), \\
\bbG(\pm 1, \tau) = 0, \quad \bbG_t(\pm 1, \tau) = 0. \label{eq:44}
\end{gather}
Then $\bbG(t,\tau) = \bbG(\tau,t)$ for all $\tau,t \in [-1,1]$ and $\bbG \in C^{2,1}([-1,1] \times [-1,1])$. To satisfy \eqref{eq:34}, we set
\begin{align}
	\ReG(t) = \int_{-1}^1 \bbG(t,\tau) \chi_1(\tau) d\tau. \label{eq:37}
\end{align}
for an unknown function $\chi_1$. We note that \eqref{eq:37} automatically enforces half of the cusp conditions at the crack tips by \eqref{eq:44}:
\begin{align}
	\ReG(\pm 1) = 0, \quad \ReG'(\pm 1) = 0. 
\end{align}

We now invert \eqref{eq:36}. It is classical (see e.g., \cite{tricomi1951finite}) that 
\begin{align}
	f(x) = \frac{1}{\pi} \int_{-1}^{-1} \frac{\varphi(\tau)}{\tau - t} d\tau \implies \varphi(t) = \frac{C}{\sqrt{1-t^2}} - \frac{1}{\pi} \int_{-1}^1 \sqrt{\frac{1-\tau^2}{1-t^2}} \frac{f(\tau)}{\tau - t} d\tau
\end{align}
where $C = \frac{1}{\pi} \int_{-1}^1 \varphi(t)dt.$ Then \eqref{eq:36} implies that,
\begin{align}
	(\kappa-1) \ReG''(t) + \frac{4\kappa}{\kappa+1}\ImQ'(t) = \frac{C}{\sqrt{1-t^2}} - \frac{1}{\pi} \int_{-1}^1 \sqrt{\frac{1-\tau^2}{1-t^2}} \frac{\chi_2(\tau)}{\tau - t} d\tau, 
\end{align}
with 
\begin{align}
	\pi C = \int_{-1}^1 \Bigl (
	(\kappa-1) \ReG''(t) + \frac{4\kappa}{\kappa+1}\ImQ'(t)
	\Bigr ) dt = \left[(\kappa-1) \ReG'(t) + \frac{4\kappa}{\kappa+1}\ImQ(t)\right] \Big |_{-1}^1 = 0,
\end{align}
by \eqref{eq:29}. 
Thus, to satisfy \eqref{eq:36}, we must have 
\begin{align}
	(\kappa-1) \ReG''(t) + \frac{4\kappa}{\kappa+1}\ImQ'(t) &= - \frac{1}{\pi} \int_{-1}^1 \sqrt{\frac{1-\tau^2}{1-t^2}} \frac{\chi_2(\tau)}{\tau - t} d\tau.
\end{align}

Integrating from $-1$ to $t$ and using \eqref{eq:29} implies 
\begin{align}
	(\kappa-1) \ReG'(t) + \frac{4\kappa}{\kappa+1}\ImQ(t) &= -\frac{1}{\pi} \int_{-1}^t \int_{-1}^1 \sqrt{\frac{1-\tau^2}{1-s^2}} \frac{\chi_2(\tau)}{\tau - s} d\tau ds, \\
	&= \int_{-1}^1 \omega_1(t,\tau) \chi_2(\tau)d\tau  	\label{eq:39}
\end{align}
where 
\begin{gather}
	\omega_1(t,\tau) := -\frac{1}{\pi}\sqrt{1-\tau^2} \int_{-1}^t \frac{1}{\sqrt{1-s^2}}\frac{1}{\tau -s}ds \\
	= -\frac{1}{\pi}\log \Bigl |
	\frac{[\tau(1 - \sqrt{1-t^2}) - t(1 + \sqrt{1-\tau^2})][\tau + 1 - \sqrt{1-\tau^2}]}{[\tau(1-\sqrt{1-t^2}) - t (1 - \sqrt{1 - \tau^2})][\tau + 1 + \sqrt{1-\tau^2}]}
	\Bigr |,
\end{gather}
and thus, we have 
\begin{align}
 \ImQ(t) = \frac{\kappa+1}{4 \kappa} \int_{-1}^1 \omega_1(t,\tau) \chi_2(\tau)d\tau + \frac{1-\kappa^2}{4 \kappa} \int_{-1}^1 \bbG_{t}(t,\tau) \chi_1(\tau) d\tau. \label{eq:41} 
\end{align}
Moreover, since $\bbG'(\pm 1,\cdot) \equiv 0$ and 
\begin{align}
\forall |\tau| \neq 1, \quad \omega_1(\pm 1, \tau) = 0, 
\end{align}
\eqref{eq:41} implies that half of the continuity of stress conditions hold at the crack tips: 
\begin{align}
\ImQ(\pm 1) = 0. 
\end{align}

\subsection{Regularization of the system}
We now express \eqref{eq:32} and \eqref{eq:33} as integral equations using \eqref{eq:34}, \eqref{eq:38}, \eqref{eq:36}, \eqref{eq:40}, \eqref{eq:37}, and \eqref{eq:41}. We denote the following Green function 
\begin{align}
	\bbK = \begin{cases}
		\frac{1}{2}(t-1)(\tau+1) \quad &-1 \leq \tau \leq t \leq 1, \\
		\frac{1}{2}(t+1)(\tau-1) \quad &-1 \leq t \leq \tau \leq 1,
	\end{cases}
\end{align}   
satisfying 
\begin{align}
\bbK_{tt}(t,\tau) = \delta(t-\tau), \quad \bbK(\pm 1, \tau) = 0. 
\end{align}
Then \eqref{eq:33}, \eqref{eq:35}, \eqref{eq:36} and \eqref{eq:40} imply that 
\begin{align}
\begin{split}
	\gamma_4 \chi_2(t) 
	&= \gamma_1 \int_{-1}^1 \bbK(t,\tau) \chi_2(\tau) d\tau - \frac{\kappa+1}{4 \kappa} \int_{-1}^1 \int_{-1}^1  \bbK(t,\tau) \omega_1(\tau,s) \chi_2(s) ds d\tau 
    \\&\quad -\frac{1-\kappa^2}{4 \kappa} \int_{-1}^1 \int_{-1}^1 \bbK(t,\tau) \bbG_{\tau}(\tau,s) \chi_1(s) ds d\tau.
\end{split}\label{eq:42}
\end{align}
We also conclude from \eqref{eq:41} and \eqref{eq:32} that  
\begin{align}
\begin{split}
\gamma_3(\kappa+1) \chi_1(t) &= \gamma_2(\kappa+1)\int_{-1}^1 \bbG_{tt}(t,\tau) \chi_1(\tau) d\tau + \frac{(\kappa+1)^2}{4\pi \kappa} \int_{-1}^1 \int_{-1}^1 \frac{\bbG_\tau(\tau,s)}{\tau - t} \chi_1(s) ds d\tau \\ &\quad- \frac{\kappa-1}{4\pi \kappa} \int_{-1}^1 \int_{-1}^1 \frac{\omega_1(\tau,s)}{\tau -t} \chi_2(s) ds d\tau + 
2 \ell\mathrm{Re}\, \Gamma + \ell \mathrm{Re} \Gamma'.
\end{split}\label{eq:43}
\end{align}
In summary, we seek $\chi_1, \chi_2 \in L^2(-1,1)$ that satisfies the pair of integral equations \eqref{eq:43} and \eqref{eq:42}. 

We observe that each integral operator on the right-hand sides of \eqref{eq:42} and \eqref{eq:43} is a compact operator on $L^2(-1,1)$. Indeed, consider \eqref{eq:42}. The first and third integral operators on the right-hand side of \eqref{eq:42} have continuous kernels in $(t,s)$ due to the smoothness properties of $\bbG$ and $\bbK$, yielding compact integral operators on $L^2(-1,1)$. Since $\omega_1 \in L^2((-1,1) \times (-1,1))$ is a Hilbert-Schmidt kernel, the second integral operator is the composition of two compact operators, $f \mapsto \int_{-1}^1 \bbK(\cdot,\tau) g(\tau) d\tau$ and $g \mapsto \int_{-1}^1 \omega(\cdot, s) g(s) ds$, and therefore it is also a compact operator. 

Next, we consider the right-hand side of \eqref{eq:43}. Again, the smoothness properties of $\bbG$ imply that the first integral operator has a continuous kernel. The second integral operator is the composition of two integral operators, one bounded on $L^2(-1,1)$ given by the Hilbert transform, and the other a compact operator given by $f \mapsto \int_{-1}^1 \bbG_\tau(\tau,s) f(s)ds$. Since the composition of a bounded operator and a compact operator is a compact operator, it follows that the second integral operator on the right-hand side of \eqref{eq:43} is a compact operator on $L^2(-1,1)$. Similarly, the third integral operator is a compact operator on $L^2(-1,1)$. 

In conclusion, all integral operators appearing in \eqref{eq:42} and \eqref{eq:43} are compact operators on $L^2(-1,1)$. By the Fredholm alternative, \eqref{eq:42} and \eqref{eq:43} are uniquely solvable for $\chi_1, \chi_2 \in L^2(-1,1)$ for all but a countable set of values of the parameters $\gamma_1$, $\gamma_2$, $\gamma_3$, $\gamma_4$.  

\subsection{Regularity of the solutions} In this section, we demonstrate that the solutions obtained in the previous section produce stresses and strains that possess finite limits at the crack tips. 

We first note that by \eqref{eq:37}, it follows that $\mathrm{Re}\, G(t) \in H^4(-1,1)$. Sobolev embedding implies that $\mathrm{Re}\, G(t) \in C^{3,1/2}([-1,1])$, and again \eqref{eq:44} implies that half of the cusp conditions at the crack tips are satisfied: 
\begin{align}
    \mathrm{Re}\, G(\pm 1) = 0, \quad \mathrm{Re}\, G'(\pm 1) = 0.
\end{align}
Similar arguments from the previous two sections applied to \eqref{eq:30}-\eqref{eq:31} imply that $\mathrm{Im}\, G(t) \in C^{3,1/2}([-1,1])$ and the other half of the cusp conditions at the crack tips are satisfied
\begin{align}
\mathrm{Im}\, G(\pm 1) = 0, \quad \mathrm{Im}\, G'(\pm 1) = 0. 
\end{align}

To analyze the regularity of $Q$, we consider \eqref{eq:41}, which we rewrite as 
\begin{align}
    \mathrm{Im}\, Q(t) = \frac{\kappa+1}{4\kappa} \int_{-1}^t \int_{-1}^1 \frac{\sqrt{1-\tau^2}}{\sqrt{1-s^2}}\frac{\chi_2(\tau)}{\tau-s}d\tau ds + \frac{1-\kappa^2}{4\kappa} \int_{-1}^1 \bbG_t(t,\tau) \chi_1(\tau)d\tau. \label{eq:55}
\end{align}
The second term is clearly continuously differentiable since $\bbG \in C^2([-1,1] \times [-1,1])$. By Tricomi's boundedness result (p. 204, \cite{tricomi1951finite}),  $\chi_2 \in L^2(-1,1)$ implies that 
\begin{align}
\phi(s) := \int_{-1}^1 \frac{\sqrt{1-\tau^2}}{\sqrt{1-s^2}}\frac{\chi_2(\tau)}{\tau-s}d\tau \in L^q(-1,1), \quad q \in (1, 4/3). 
\end{align}
Thus, $\mathrm{Im}\, Q \in W^{1,q}(-1,1)$ for every $q \in (1,4/3)$. By Sobolev embedding and the fact that 
\begin{align}
\int_{-1}^1 \int_{-1}^1 \frac{\sqrt{1-\tau^2}}{\sqrt{1-s^2}}\frac{\chi_2(\tau)}{\tau-s}d\tau ds = 0, 
\end{align}
we conclude that for every $r \in (0, 1/4)$, 
\begin{align}
 \mathrm{Im}\, Q \in C^r([-1,1]) \quad \mbox{ and } \quad \mathrm{Im}\, Q(\pm 1) = 0. 
\end{align}
Similar arguments from the previous two sections applied to \eqref{eq:30}-\eqref{eq:31} imply that $\mathrm{Re}\, Q(t) \in C^{r}([-1,1])$ for all $r \in (0,1/4)$ and the other half of the continuity of stress conditions hold at the crack tips:
\begin{align}
\mathrm{Re}\, Q(\pm 1) = 0. 
\end{align}

In summary, and after rescaling, we conclude that
\begin{gather}
g \in C^{3,1/2}([-\ell,\ell]), \quad g(\pm \ell) = g'(\pm \ell) = 0, \label{eq:46} \\
\forall r \in (0,1/4), \quad q \in C^r([-\ell,\ell]), \quad q(\pm \ell) = 0.  \label{eq:47}
\end{gather}
The finite Hilbert transform on $[-\ell, \ell]$ of a H\"older continuous function $f : [-\ell,\ell] \rar \bbC$ satisfying $f(\pm \ell) = 0$ yields a bounded H\"older continuous function on $[-\ell,\ell]$ via extending $f$ to 0 outside of $[-\ell, \ell]$ and using the classical result for singular integral operators acting on Holder spaces (see Chapter 7 of \cite{MuscaluSchlag2013VolI}). This fact, \eqref{eq:46}, \eqref{eq:47}, \eqref{eq:14}, and \eqref{eq:15} imply that the stresses and strains have finite limits at the crack tips.

\section{Numerical solution of the fracture problem}

This section presents numerical solutions of the mixed-mode problem based on Chebyshev polynomial approximations of the integral equation system. The computations illustrate how the surface strain-gradient moduli influence the near-tip fields and confirm that the stresses and strains remain bounded for a broad range of admissible parameter values. These results complement the analytical theory and quantify the magnitude of the regularization produced by the surface strain-gradient terms.

\subsection{Numerical method} Consider the systems \eqref{eq:32}, \eqref{eq:33} and \eqref{eq:30}, \eqref{eq:31}. In order to find the numerical solution to these systems we will start with the Chebyshev polynomial approximations of the unknown functions in the form:
\begin{equation}
    \ImG'(t)=\sum_{k=1}^{N+2}A_kU_k(t)\sqrt{1-t^2},\,\,\,\ReQ(t)=\sum_{k=0}^{N+1}B_kU_k(t)\sqrt{1-t^2},
    \label{eq:Num1}
\end{equation}
\begin{equation}
    \ReG'(t)=\sum_{k=1}^{N+2}C_kU_k(t)\sqrt{1-t^2},\,\,\,\ImQ(t)=\sum_{k=0}^{N+1}D_kU_k(t)\sqrt{1-t^2},
    \label{eq:Num2}
\end{equation}
where $U_k(t)$ denotes a Chebyshev polynomial of the second kind of the degree $k$, and $N$ corresponds to the desired polynomial degree of approximation for the unknown functions. Note that taking the approximations of the unknown functions in the forms \eqref{eq:Num1}, \eqref{eq:Num2} allows one to automatically satisfy the tip conditions \eqref{eq:29}. The last two conditions \eqref{eq:29} are satisfied due to the square root term in the formulas \eqref{eq:Num1}, \eqref{eq:Num2}, while the first conditions \eqref{eq:29} are satisfied due to the absence of the first term $k=0$ in the formulas for $\ReG'(t)$ and $\ImG'(t)$ and the fact that we can integrate the function $G'(t)$. 

Taking the corresponding derivatives of the formulas \eqref{eq:Num1}, \eqref{eq:Num2} leads to the expressions:
\begin{equation}
    \ImG''(t)=-\sum_{k=1}^{N+2}(k+1)A_k\frac{T_{k+1}(t)}{\sqrt{1-t^2}},\,\,\,\ReG''(t)=-\sum_{k=1}^{N+2}(k+1)C_k\frac{T_{k+1}(t)}{\sqrt{1-t^2}},
    \label{eq:Num3}
\end{equation}
\begin{equation}
\ImG''''(t)=\sum_{k=1}^{N+2}(k+1)(k+2)(k+3)A_k\frac{T_{k+1}(t)}{(\sqrt{1-t^2})^3}-3\sum_{k=1}^{N+2}(k+1)(k+3)A_k\frac{U_{k+1}(t)}{(\sqrt{1-t^2})^3}-
    \label{eq:Num4}
\end{equation}
$$
3\sum_{k=1}^{N+2}(k+1)A_k\frac{T_{k+3}(t)}{(\sqrt{1-t^2})^5},
$$
\begin{equation}
\ReG''''(t)=\sum_{k=1}^{N+2}(k+1)(k+2)(k+3)C_k\frac{T_{k+1}(t)}{(\sqrt{1-t^2})^3}-3\sum_{k=1}^{N+2}(k+1)(k+3)C_k\frac{U_{k+1}(t)}{(\sqrt{1-t^2})^3}-
    \label{eq:Num5}
\end{equation}
$$
3\sum_{k=1}^{N+2}(k+1)C_k\frac{T_{k+3}(t)}{(\sqrt{1-t^2})^5},
$$
where $T_k(t)$ denotes a Chebyshev polynomial of the first kind of the degree $k$.

Moreover, utilizing the following formula for the Cauchy integral of the Chebyshev polynomials of the second kind:
\begin{equation}
\frac{1}{\pi}\int_{-1}^1\frac{U_k(t)\sqrt{1-t^2}dt}{t-x}=-T_{k+1}(x),
    \label{eq:Num6}
\end{equation}
we can obtain the following expressions for the singular integrals:
\begin{equation}
-\frac{1}{\pi} \int_{-1}^1 \frac{\ImG'(\tau)}{\tau - t} d\tau - \frac{\kappa-1}{\pi(\kappa+1)} \int_{-1}^1 \frac{\ReQ(\tau)}{\tau - t} d\tau=\sum_{k=0}^{N+2}\left(A_k+\frac{\kappa-1}{\kappa+1}B_k\right)T_{k+1}(t),
    \label{eq:Num7}
\end{equation}
\begin{equation}
\frac{\kappa-1}{\pi} \int_{-1}^1 \frac{\ImG'(\tau)}{\tau - t} d\tau - \frac{4\kappa}{\pi(\kappa+1)} \int_{-1}^1 \frac{\ReQ(\tau)}{\tau - t} d\tau=\sum_{k=0}^{N+2}\left(-(\kappa-1)A_k+\frac{4\kappa}{\kappa+1}B_k\right)T_{k+1}(t),
    \label{eq:Num8}
\end{equation}
\begin{equation}
\frac{1}{\pi} \int_{-1}^1 \frac{\ReG'(\tau)}{\tau - t} d\tau - \frac{\kappa-1}{\pi(\kappa+1)} \int_{-1}^1 \frac{\ImQ(\tau)}{\tau - t} d\tau=\sum_{k=0}^{N+2}\left(-C_k+\frac{\kappa-1}{\kappa+1}D_k\right)T_{k+1}(t),
    \label{eq:Num9}
\end{equation}
\begin{equation}
\frac{\kappa-1}{\pi} \int_{-1}^1 \frac{\ReG'(\tau)}{\tau - t} d\tau + \frac{4\kappa}{\pi(\kappa+1)} \int_{-1}^1 \frac{\ImQ(\tau)}{\tau - t} d\tau=\sum_{k=0}^{N+2}\left(-(\kappa-1)C_k-\frac{4\kappa}{\kappa+1}D_k\right)T_{k+1}(t),
    \label{eq:Num10}
\end{equation}
where, for the sake of brevity, it is assumed that 
\begin{equation}
A_0=0,\,\,\,C_0=0,\,\,\,B_{N+2}=0,\,\,\,D_{N+2}=0.
    \label{eq:Num11}
\end{equation}

Taking the derivatives and utilizing the properties of the Chebyshev polynomials leads to the following relations:
\begin{equation}
T'_{k+1}(t)=(k+1)U_k(t),
    \label{eq:Num12}
\end{equation}
\begin{equation}
T'''_{k+1}(t)=-(k+1)(k+2)(k+3)\frac{U_k(t)}{1-t^2}-3(k+1)(k+3)\frac{T_{k+2}(t)}{(1-t^2)^2}+3(k+1)\frac{U_{k+2}(t)}{(1-t^2)^2}.
    \label{eq:Num13}
\end{equation}

Substituting the formulas \eqref{eq:Num3}-\eqref{eq:Num5}, \eqref{eq:Num7}-\eqref{eq:Num10}, \eqref{eq:Num12}, \eqref{eq:Num13} into the systems of singular integral equations \eqref{eq:32}, \eqref{eq:33} and \eqref{eq:30}, \eqref{eq:31}, and utilizing the orthogonality property of the Chebyshev polynomials:
$$
\int_{-1}^1\frac{T_j(t)T_k(t)dt}{\sqrt{1-t^2}}=\left\{
\begin{array}{cc}
0,& j\neq k,\\
\pi/2, & j=k\neq 0,\\
\pi, & j=k=0,
\end{array}
\right.
$$
leads to the following systems of linear algebraic equations for the coefficients $A_k$, $B_k$, $C_k$, $D_k$:
$$
\sum_{k=0}^{N+2}\left(A_k+\frac{\kappa-1}{\kappa+1}B_k \right)I_{jk}^1+\mathrm{Im}\Gamma'\ell J_j=-\gamma_1(\kappa+1)\sum_{k=0}^{N+2}A_k(k+1)I_{jk}^2-
$$
$$
\gamma_4(\kappa+1)\sum_{k=0}^{N+2}A_k(k+1)(k+2)(k+3)I_{jk}^3+3\gamma_4(\kappa+1)\sum_{k=0}^{N+2}A_k(k+1)(k+3)I_{jk}^4+
$$
\begin{equation}
  3\gamma_4(\kappa+1)\sum_{k=0}^{N+2}A_k(k+1)\delta_{j,k+3}\frac{\pi}{2},\,\,\,j=0,\ldots,N,
  \label{eq:Num14}
\end{equation}
$$
\sum_{k=0}^{N+1}B_kI_{jk}^5=\gamma_2\sum_{k=0}^{N+2}\left((\kappa-1)A_k-\frac{4\kappa}{\kappa+1}B_k \right)(k+1)I_{jk}^6-
$$
$$
\gamma_3\sum_{k=0}^{N+2}\left((\kappa-1)A_k-\frac{4\kappa}{\kappa+1}B_k \right)\left(-(k+1)(k+2)(k+3)I_{jk}^7\right.
$$
\begin{equation}
\left.-3(k+1)(k+3)\delta_{j,k+2}\frac{\pi}{2}+3(k+1)I_{jk}^8 \right),\,\,\,j=0,\ldots,N,
\label{eq:Num15}
\end{equation}
$$
\sum_{k=0}^{N+2}\left(-C_k+\frac{\kappa-1}{\kappa+1}D_k \right)I_{jk}^1+(2\mathrm{Re}\Gamma+\mathrm{Re}\Gamma')\ell J_j=\gamma_2(\kappa+1)\sum_{k=0}^{N+2}C_k(k+1)I_{jk}^2+
$$
$$
\gamma_3(\kappa+1)\sum_{k=0}^{N+2}C_k(k+1)(k+2)(k+3)I_{jk}^3-3\gamma_3(\kappa+1)\sum_{k=0}^{N+2}A_k(k+1)(k+3)I_{jk}^4-
$$
\begin{equation}
  3\gamma_3(\kappa+1)\sum_{k=0}^{N+2}C_k(k+1)\delta_{j,k+3}\frac{\pi}{2},\,\,\,j=0,\ldots,N,
  \label{eq:Num16}
\end{equation}
$$
-\sum_{k=0}^{N+1}D_kI_{jk}^5=\gamma_1\sum_{k=0}^{N+2}\left((\kappa-1)C_k+\frac{4\kappa}{\kappa+1}D_k \right)(k+1)I_{jk}^6-
$$
$$
\gamma_4\sum_{k=0}^{N+2}\left((\kappa-1)C_k+\frac{4\kappa}{\kappa+1}D_k \right)\left(-(k+1)(k+2)(k+3)I_{jk}^7\right.
$$
\begin{equation}
\left.-3(k+1)(k+3)\delta_{j,k+2}\frac{\pi}{2}+3(k+1)I_{jk}^8 \right),\,\,\,j=0,\ldots,N,
\label{eq:Num17}
\end{equation}
where the coefficients $I_{jk}^l$ are given by the formulas:
$$
I_{jk}^1=\int_{-1}^1T_{k+1}(t)T_j(t)(1-t^2)^2dt,
$$
$$
I_{jk}^2=\int_{-1}^1T_{k+1}(t)T_j(t)(1-t^2)^{3/2}dt,
$$
$$
I_{jk}^3=\int_{-1}^1T_{k+1}(t)T_j(t)(1-t^2)^{1/2}dt,
$$
$$
I_{jk}^4=\int_{-1}^1U_{k+1}(t)T_j(t)(1-t^2)^{1/2}dt,
$$
\begin{equation}
I_{jk}^5=\int_{-1}^1U_{k}(t)T_j(t)(1-t^2)^2dt,   
\label{eq:Num18}
\end{equation}
$$
I_{jk}^6=\int_{-1}^1U_{k}(t)T_j(t)(1-t^2)^{3/2}dt,
$$
$$
I_{jk}^7=\int_{-1}^1U_{k}(t)T_j(t)(1-t^2)^{1/2}dt=I_{j(k-1)}^4,
$$
$$
I_{jk}^8=\int_{-1}^1\frac{U_{k+2}(t)T_j(t)}{\sqrt{1-t^2}}dt,
$$
$$
J_j^1=\int_{-1}^1T_j(t)(1-t^2)^2dt.
$$
The integrals in the formulas \eqref{eq:Num18} can be computed approximately by using Gauss-Chebyshev quadratures.

Finally, in order to close the systems of linear algebraic equations  \eqref{eq:Num14}, \eqref{eq:Num15} and \eqref{eq:Num16}, \eqref{eq:Num17}, we need to satisfy the tip conditions \eqref{eq:couple} and \eqref{eq:40}. These conditions lead to the equations:
\begin{equation}
\sum_{k=0}^{N+2}\left((\kappa-1)A_k-\frac{4\kappa}{\kappa+1}B_k \right)(k+1)^2(\pm 1)^k=0,
    \label{eq:Num19}
\end{equation}
\begin{equation}
\sum_{k=0}^{N+2}\left((\kappa-1)C_k+\frac{4\kappa}{\kappa+1}D_k \right)(k+1)^2(\pm 1)^k=0.
    \label{eq:Num20}
\end{equation}

The systems of linear algebraic equations \eqref{eq:Num14}, \eqref{eq:Num15}, \eqref{eq:Num19} and \eqref{eq:Num16}, \eqref{eq:Num17}, \eqref{eq:Num20} can then be solved to find the values of the coefficients $A_k$, $B_k$, $C_k$, $D_k$. These coefficients can be used in the formulas \eqref{eq:Num1}, \eqref{eq:Num2} to determine the unknown functions $\ReG'(t)$, $\ImG'(t)$, $\ReQ(t)$, $\ImQ(t)$, which can be further used to find the values of the stresses and the displacements through the formulas \eqref{eq:12}, \eqref{eq:13}. 

\subsection{Examples} The numerical examples have been computed for the following values of the mechanical and geometric parameters: $\mu=26.3\; \mbox{GPa}$, $\nu=0.2481$, $\gamma_1=\gamma_2=1.602\;\gamma\; \mbox{N/m}$, $\gamma_4=1.602\;\cdot 10^{-19}\gamma\; \mbox{N/m}^2$, where $\gamma=5$; $\sigma_{11}^{\infty}=\sigma_{22}^{\infty}=\sigma_{12}^{\infty}=0.5\; \mbox{GPa}$, and $\ell=5\; \mbox{nm}$. The new parameter $\gamma_4$ introduced in this paper takes on values $0.1\gamma_3$, $\gamma_3$, and $10\gamma_3$, and is equal to zero for the Steigmann-Ogden model. The results are plotted for $N=30$, and $100$ quadrature nodes in the computation of the integrals. Further increasing these values does not improve the accuracy of the computations.

\begin{figure}
\begin{center}
\includegraphics[width=0.45\textwidth]{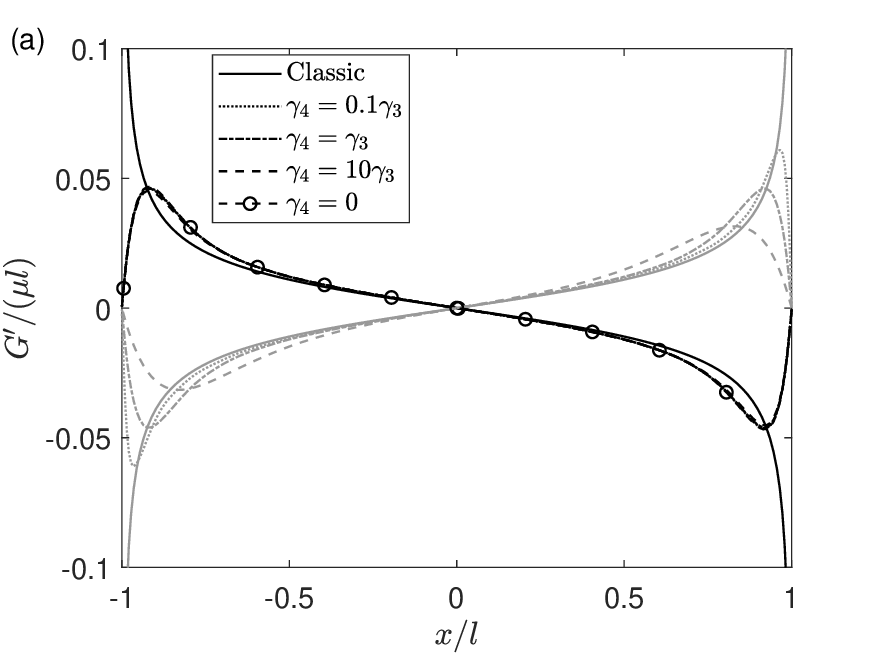} 
\includegraphics[width=0.45\textwidth]{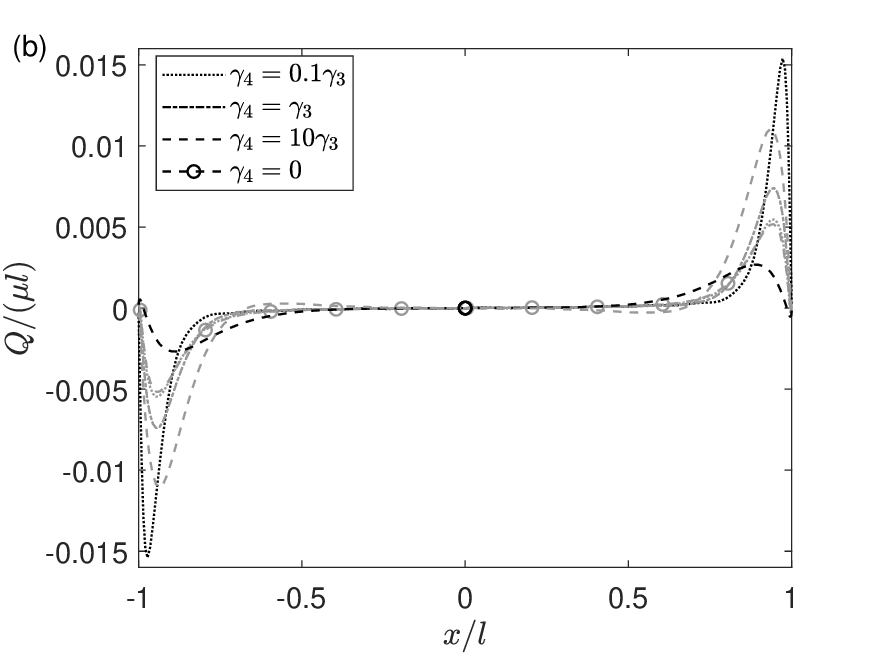} 
\end{center}
\caption{The values of the functions $G'(t)$ and $Q(t)$ for various values of the parameter $\gamma_4$.}
\label{fig2}
\end{figure} 

On Figure \ref{fig2}, the graphs of the functions $\mbox{Re}G'(t)$ (black lines), $\mbox{Im}G'(t)$ (gray lines), $\mbox{Re}Q(t)$ (black lines), and $\mbox{Im}Q(t)$ (gray lines) are shown for the following values of the parameter $\gamma_4$: $\gamma_4=0.1\gamma_3$ (dot lines), $\gamma_4=\gamma_3$ (dash-dot lines), $\gamma_4=10\gamma_3$ (dash lines). The results of the present study are compared to the classic results without taking into account any of the surface parameters which are shown by solid lines. The results have also been compared to the results for the Steigmann-Ogden model ($\gamma_4=0$) which are shown by dash lines with circular markers and have been previously reported in \cite{ZemMac2020}. It is observed that the results of the current study correspond well to the known in the literature results. It can be seen that incorporating additional surface effects represented by the parameter $\gamma_4$ can significantly affect the results. In particular, smoothing effect near the tips of the crack can be seen on Figure \ref{fig2}.  

\begin{figure}
\begin{center}
\includegraphics[width=0.45\textwidth]{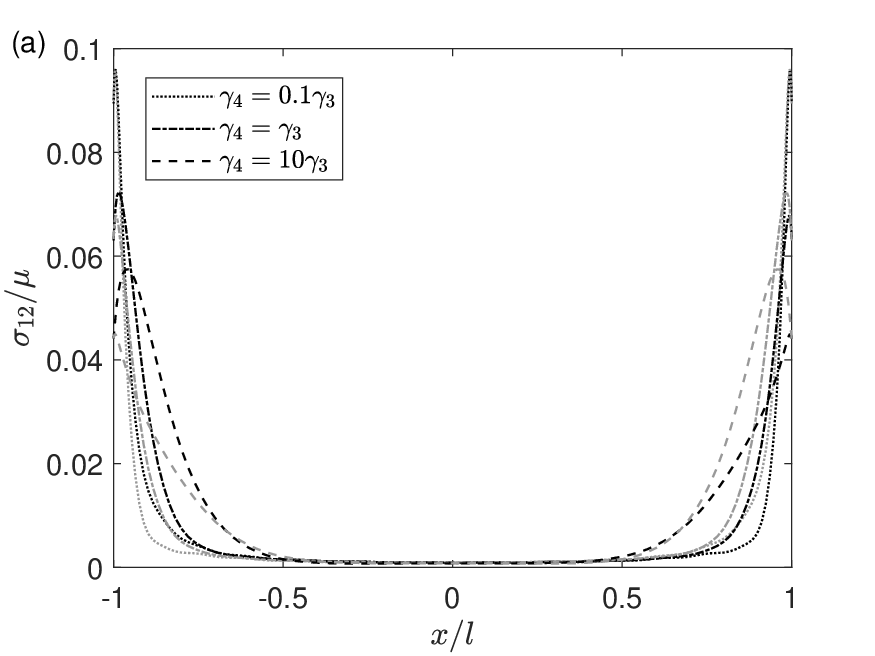} 
\includegraphics[width=0.45\textwidth]{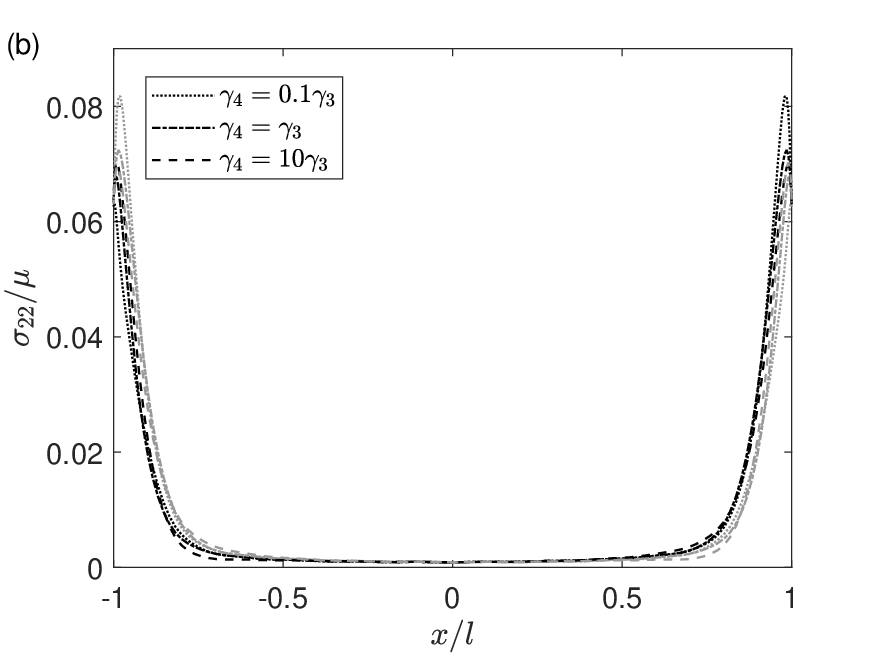} 
\end{center}
\caption{The values of the stresses $\sigma_{12}/\mu$ and $\sigma_{22}/\mu$ for various values of the parameter $\gamma_4$.}
\label{fig3}
\end{figure} 

The values of the normalized stresses $\sigma_{12}/\mu$ and $\sigma_{22}/\mu$ for the same values of the mechanical and geometric parameters as before are presented on Figure \ref{fig3}. The graphs of the stresses $\sigma_{12}^+/\mu$ (black lines) and $\sigma_{12}^-/\mu$ (gray lines) on the upper and lower bank of the crack are shown on Figure \ref{fig3}a, and  $\sigma_{22}^+/\mu$ (black lines) and $\sigma_{22}^-/\mu$ (gray lines) are shown on Figure \ref{fig3}b for the following values of the parameter $\gamma_4$: $\gamma_4=0.1\gamma_3$ (dot lines), $\gamma_4=\gamma_3$ (dash-dot lines), $\gamma_4=10\gamma_3$ (dash lines).  Again, it can be observed that the parameter $\gamma_4$ has a significant influence on the results and produces lower values of the stresses at the tips of the crack.

\begin{figure}
\begin{center}
\includegraphics[width=0.45\textwidth]{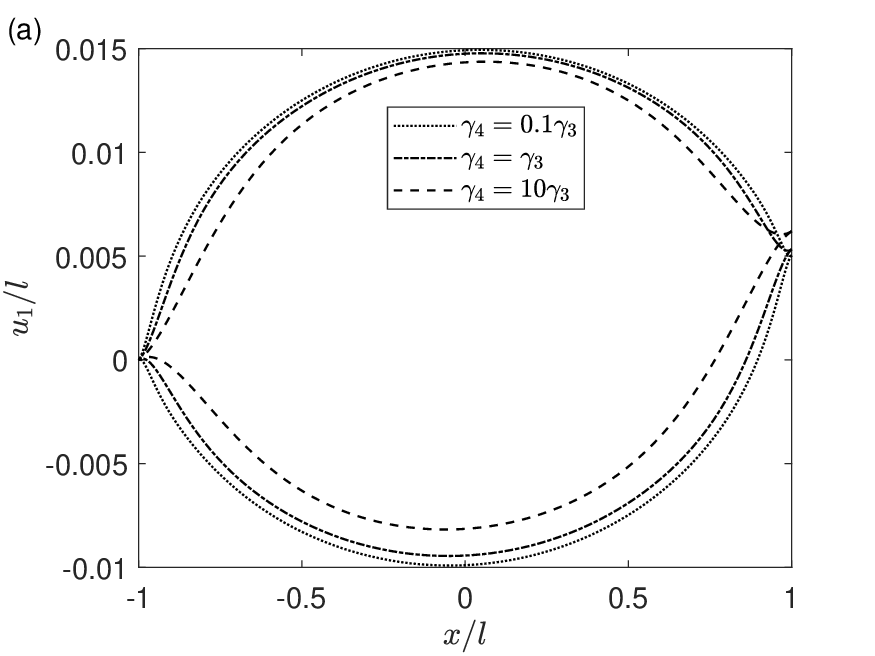} 
\includegraphics[width=0.45\textwidth]{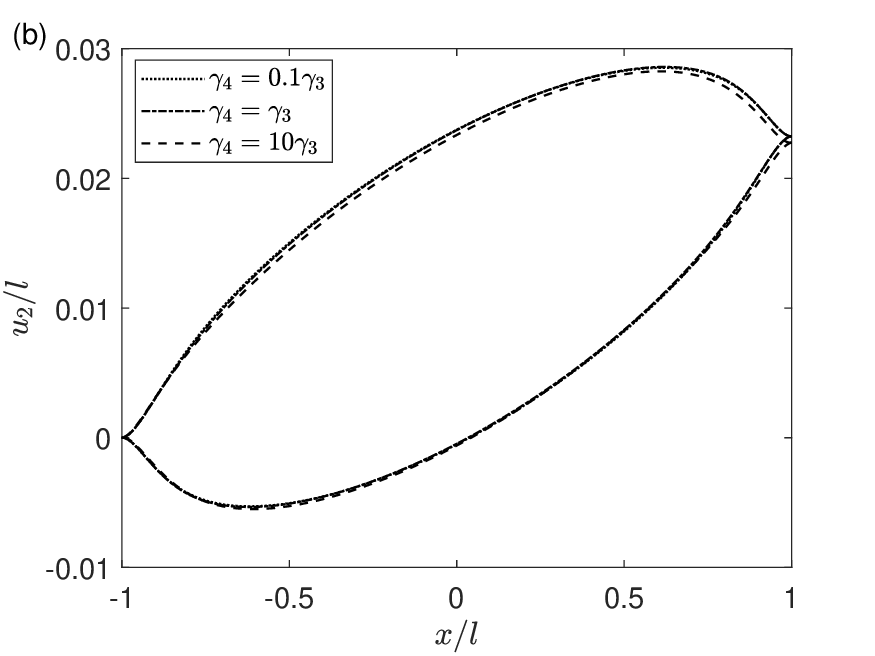} 
\end{center}
\caption{The values of the stresses $u_1/\ell$ and $u_2/\ell$ for various values of the parameter $\gamma_4$.}
\label{fig4}
\end{figure} 

Finally, the displacements $u_1/\ell$ and $u_2/\ell$ of the banks of the crack are shown on Figure \ref{fig4}. The graphs of the displacements $u_1^+/\ell$ and $u_1^-/\ell$ on the upper and lower bank of the crack are shown on Figure \ref{fig4}a, and  $u_2^+/\ell$ and $u_2^-/\ell$ are shown on Figure \ref{fig4}b for the following values of the parameter $\gamma_4$: $\gamma_4=0.1\gamma_3$ (dot lines), $\gamma_4=\gamma_3$ (dash-dot lines), $\gamma_4=10\gamma_3$ (dash lines). It can be seen that larger values $\gamma_4$ produce more cusp-like profiles near the tips of the crack and result in a smaller opening and less stretch of the crack. It can also be seen that the parameter $\gamma_4$ has larger influence on the displacement $u_1$ than $u_2$.

\section{Conclusions}
This paper studies a plane mixed mode fracture problem within a surface strain gradient elasticity framework previously introduced in \cite{rodriguez2024elastic, rodriguez2024midsurface} for mode III loading. The theory generalizes the Steigmann–Ogden surface energy by incorporating surface gradients of stretching in a variationally consistent manner.

The surface energy and governing equations are derived from a principle of virtual work, and the admissible crack tip conditions are analyzed in detail. These tip conditions reduce to crack closure in the sense of a cusp condition, continuity of stresses at the crack tips, and the absence of additional twisting moments at the crack edge. Using complex analytic representations for stresses and displacements, the fracture problem is reduced to a system of singular integro differential equations that can be regularized. The central analytical result is a rigorous proof that the resulting stress and strain fields remain bounded at the crack tips, a behavior that departs fundamentally from classical fracture mechanics.

A numerical scheme, employing Chebyshev polynomial representations, is derived and the resulting solutions are compared with results from the classical Steigmann-Ogden model \cite{ZemMac2020}. The additional surface strain-gradient term is observed to produce smoother crack tip closure and a smaller crack opening displacement. Taken together with the results of \cite{rodriguez2024elastic, rodriguez2024midsurface}, the analytical and numerical findings demonstrate that gradient strain surface elasticity yields bounded stress and strain fields for all modes of plane brittle fracture.

An important direction for future work is the formulation of fracture criteria based on finite crack tip stresses or strains within this framework. In particular, the absence of singularities in all modes of loading suggests the possibility of defining fracture criteria directly in terms of critical stress \emph{values} at the crack tips, rather than stress intensity factors. Such criteria could provide a physically transparent alternative to classical fracture measures and merit further analytical and numerical investigation.

\bibliographystyle{plain}
\bibliography{researchbibmech, researchbibmech_NSF}

\bigskip
\footnotesize

\noindent \textsc{Department of Mathematics, The University of North Carolina at Chapel Hill, CB 3250 Phillips Hall,
Chapel Hill, NC 27599}\\
\noindent \textit{E-mail address}: \texttt{crodrig@email.unc.edu}

\bigskip

\noindent \textsc{Department of Mathematics, Kansas State University, 138 Cardwell Hall,
Manhattan, Kansas, 66506}\\
\noindent \textit{E-mail address}: \texttt{azem@ksu.edu}

\end{document}